\documentclass{article}
\usepackage[preprint]{neurips_2026}  

\usepackage[utf8]{inputenc}
\usepackage[T1]{fontenc}
\usepackage{amsmath,amssymb,amsfonts,amsthm}
\usepackage{mathtools}
\usepackage{booktabs}
\usepackage{nicefrac}
\usepackage{microtype}
\usepackage{graphicx}
\usepackage{float}
\usepackage{algorithm}
\usepackage{algorithmic}
\usepackage{hyperref}
\usepackage{url}
\usepackage{cleveref}
\usepackage{xcolor}
\usepackage{subcaption}
\usepackage{multirow}
\usepackage{enumitem}

\newtheorem{theorem}{Theorem}
\newtheorem{proposition}[theorem]{Proposition}
\newtheorem{lemma}[theorem]{Lemma}
\newtheorem{corollary}[theorem]{Corollary}
\newtheorem{assumption}{Assumption}
\theoremstyle{definition}
\newtheorem{definition}[theorem]{Definition}
\newtheorem{decomposition}[theorem]{Decomposition}
\theoremstyle{remark}
\newtheorem{remark}{Remark}

\newcommand{\indep}{\perp\!\!\!\perp}
\newcommand{\nindep}{\not\!\perp\!\!\!\perp}
\newcommand{\E}{\mathbb{E}}

\newcommand{\Xb}{\mathbf{X}}

\newcommand{\hatX}{\hat{\mathbf{X}}}
\newcommand{\tCI}{t_{\text{CI}}}

\title{PAIR-CI: Calibrated Conditional Independence Testing for Causal Discovery with Incomplete Data}

\author{Thomas S.~Robinson \\
  Department of Methodology\\
  London School of Economics and Political Science\\
  London, UK \\
  \texttt{t.robinson7@lse.ac.uk} \\
\And Ranjit Lall \\
  Department of Politics and International Relations\\
  University of Oxford\\
  Oxford, UK \\
  \texttt{ranjit.lall@politics.ox.ac.uk}
}

\begin{document}

\maketitle

\begin{abstract}
    The standard constraint-based paradigm for causal discovery with incomplete data---impute first, test second---is frequently miscalibrated: any consistent conditional independence (CI) test rejects a true null with probability approaching 1 when imputation error induces spurious conditional dependence. We introduce \textit{PAIR-CI}, a nonparametric CI test that restores calibration by integrating multiple imputation directly into the inferential procedure via a paired permutation design. PAIR-CI compares cross-validated models that include and exclude the candidate variable while receiving the same imputed conditioning set, forcing imputation error to cancel in their loss difference rather than contaminate the test statistic. A provably consistent variance estimator jointly accounts for uncertainty arising from cross-validation and multiple imputation---to our knowledge, the first formal unification of these two inferential frameworks. In simulations, existing imputation-based CI tests exhibit false positive rates of 28--45\% when data are missing not at random (MNAR), whereas PAIR-CI averages below the nominal 5\% level across data-generating processes and missingness mechanisms. These gains are largest in nonlinear settings and grow with causal graph size: when integrated into the PC algorithm, PAIR-CI reduces structural Hamming distance by 8\% on 10-variable nonlinear graphs, 15\% on 30-variable equivalents, and up to 44\% on the 56-variable HAILFINDER network, with stable performance in all settings.
\end{abstract}

\section{Introduction}
\label{sec:intro}

Conditional independence (CI) testing---determining whether $Z \indep Y \mid \Xb$---is the inferential engine of constraint-based methods for causal discovery across the sciences. Widely used algorithms such as PC (Peter--Clark), FCI (Fast Causal Inference), and their variants reduce the problem of learning causal structure to a sequence of CI tests over variable pairs with progressively expanding conditioning sets \citep{spirtes2000causation}. This creates a fundamental problem when data are incomplete: missing values in the conditioning set $\Xb$ prevent direct evaluation of relationships between test variables. Standard remedies are problematic. Discarding incomplete observations---either globally (complete-case analysis) or per test (test-wise deletion; \citealt{tu2019causal})---sacrifices data, yields different effective samples across tests, and leads to selection bias in the common scenario that data are not missing completely at random (MCAR). Imputing missing values before testing is valid under MCAR or missing at random (MAR) but can lead to miscalibration under MNAR and other forms of misspecified imputation: residual bias in the completed data may introduce spurious associations between test variables that are mistaken for genuine conditional dependence.\footnote{In practice, as \citet[567]{graham2009missing} notes, these distinctions are often blurry: ``The best way to think of all missing data is as a continuum between MAR and MNAR. Because all missingness is MNAR (i.e., not purely MAR), then whether it is MNAR or not should never be the issue.''} We formalize this bias in Proposition~\ref{prop:impossibility}: miscalibration is not a small-sample artifact but an asymptotic inevitability whenever imputation error induces residual conditional dependence between test variables. The result is a causal graph populated with spurious edges that propagate errors through edge orientation, compromising the reliability of downstream inference.

To avoid this miscalibration, we propose \textit{PAIR-CI}, a nonparametric CI test that integrates multiple imputation directly into the inferential procedure via a paired permutation design. PAIR-CI compares cross-validated models of $Z$ that differ in access to the candidate variable but receive the same imputed conditioning set, ensuring that any distortion introduced by imputation affects both models identically and thus cancels in their loss difference rather than entering the test statistic. Under the null, both models generalize equally well; under the alternative, the model including the candidate achieves lower out-of-sample loss. Under four mild regularity assumptions, PAIR-CI yields exact asymptotic size for the internal null $Z \indep Y \mid \hatX$. The test's paired structure extends this calibration guarantee to the scientific null $Z \indep Y \mid \Xb$ under MNAR, requiring only that imputation error be random rather than systematic.

A key inferential challenge is that PAIR-CI's test statistic combines two distinct sources of dependence: overlap between training sets in $K$-fold cross-validation; and variation across $M$ multiply imputed datasets. Whereas standard variance estimators address only one source at a time, we simultaneously account for both by nesting the provably consistent within-imputation estimator developed by \citet{bayle2020cross} inside Rubin's rules \citep{rubin1987multiple}. The resulting combined estimator delivers asymptotically valid inference with correct size and power---to our knowledge, the first formal result connecting cross-validation and multiple imputation.

Together, these design choices and inferential guarantees translate into strong empirical performance: in simulations spanning multiple data-generating processes (DGPs), sample sizes, and missingness mechanisms, existing imputation-based CI tests exhibit false positive rates of 28--45\% under MNAR whereas PAIR-CI averages below the nominal level. The resulting gains in graph recovery grow with scale, rising from an 8\% reduction in structural Hamming distance (SHD) on 10-variable graphs to 15\% on 30-variable equivalents and as much as 44\% on the 56-variable HAILFINDER weather forecasting network.

To summarize, we make three principal contributions:
\begin{enumerate}[leftmargin=*,nosep]
    \item \textbf{A nonparametric CI test with calibration under misspecification.} We develop a classification-based CI test with incomplete $\Xb$ that integrates multiple imputation, cross-validation, and conditional permutation into a single inferential framework. PAIR-CI achieves exact asymptotic size for the internal null under four relatively undemanding assumptions (Proposition~\ref{prop:calibration}), with power converging to 1 under the alternative (Proposition~\ref{prop:consistency}). The paired design cancels imputation error by feeding both models the same imputed conditioning set, empirically maintaining calibration for the scientific null without requiring correct specification of the imputation model (Remark~\ref{rem:paired_robust}). Among the methods we consider, only PAIR-CI achieves an average false positive rate $\leq 4\%$ across DGPs and missingness mechanisms.
    \item \textbf{Unified variance estimation for cross-validation and multiple imputation.} Reliable inference requires accounting for both fold overlap in cross-validation and variation across multiply imputed datasets, yet existing combinations of the two rely on naive pooled variances. We show that embedding \citeauthor{bayle2020cross}'s \citeyearpar{bayle2020cross} cross-validation variance approximation within Rubin's rules yields an asymptotically exact estimator (Theorem~\ref{thm:bridge}) that improves power by 9--10 percentage points at intermediate effect sizes and enables $K=10$ folds without sacrificing calibration. Beyond causal discovery, this estimator provides the basis for principled inference with any method that combines cross-validation and multiple imputation, including cross-validated model selection on incomplete data.
    \item \textbf{Robust causal discovery across missingness mechanisms.} When integrated into the PC algorithm, PAIR-CI's advantage in graph recovery increases with scale and is largest under nonlinear edges, where parametric baselines are further penalized by their linear assumptions. The SHD gap over the best skeleton-recovering baseline grows from 8\% at $p=10$ variables to 15\% at $p=30$, reaching 37--44\% on the HAILFINDER weather network ($p=56$), with F1 $\geq 0.56$ in all settings. With linear edges, PAIR-CI overtakes parametric baselines by $p=30$ as conditioning sets expand and partial correlations become less reliable. Performance remains stable across MAR, MNAR, and mixed mechanisms, while all baselines degrade under at least one condition.
\end{enumerate}

\section{Background and Related Work}
\label{sec:background}

\paragraph{Constraint-based causal discovery.} The PC algorithm \citep{spirtes2000causation} recovers causal structure from observational data as a completed partially directed acyclic graph (CPDAG)---a representation of all directed acyclic graphs (DAGs) consistent with observed CI relations---in two phases. The \emph{skeleton phase} begins with a complete undirected graph and removes edges by testing $X_i \indep X_j \mid \mathbf{S}$ for conditioning sets $\mathbf{S}$ of increasing size drawn from the adjacency of $X_i$. The \emph{orientation phase} then directs edges by identifying v-structures (configurations $X \to Z \leftarrow Y$ where $X$ and $Y$ are non-adjacent) and applying Meek's \citeyearpar{meek1995causal} rules to propagate orientation constraints through the graph. The algorithm recovers the true CPDAG when (i) the CI oracle exhibits correct asymptotic size and consistency, and (ii) the observed distribution is faithful to the underlying DAG \citep{kalisch2007estimating}.

\paragraph{Conditional independence tests.} Traditional CI tests include Fisher's $Z$-test for Gaussian data and $\chi^2$ or $G$-tests for discrete data. In nonparametric settings, kernel-based tests such as KCI \citep{zhang2011kernel} and the Randomized Conditional Independence Test \citep[RCIT;][]{strobl2019approximate} assess CI by embedding variables into high-dimensional feature spaces, while the Generalized Covariance Measure \citep[GCM;][]{shah2020hardness} measures dependence between regression errors. Recently developed classifier-based tests compare the predictive performance of models that include and exclude the candidate variable \citep{sen2017model,watson2021testing,bellot2019conditional}. Our approach falls into this last category but addresses the challenge of \emph{incomplete conditioning sets}, which existing methods do not accommodate natively.

\paragraph{Incomplete data in causal discovery.} Test-wise deletion \citep{tu2019causal} restricts each CI test in the skeleton phase to complete observations across variables of interest, inducing selection bias under MAR and MNAR. Parametric tests based on multiple imputation, most notably the state-of-the-art FZ-Rubin pipeline proposed by \citet{witte2022multiple}, pool Fisher's $Z$ across imputations using Rubin's rules \citep{rubin1987multiple}, yielding valid inference when the imputation model is correctly specified and test variables are jointly Gaussian. Score-based methods such as MissDAG \citep{gao2022missdag} jointly learn the DAG and the missingness mechanism at the cost of parametric assumptions. Although often unknown in advance, the missingness mechanism can also be encoded explicitly via $m$-graph modeling \citep{gain2018structure,mohan2021graphical}.


\section{Method}
\label{sec:method}

\subsection{PAIR-CI Testing Procedure}
\label{sec:ci_test}

\begin{figure}[t]
    \centering
    \caption{\textbf{Schematic of PAIR-CI testing procedure}.}
    \label{fig:method}
    \includegraphics[width=\textwidth]{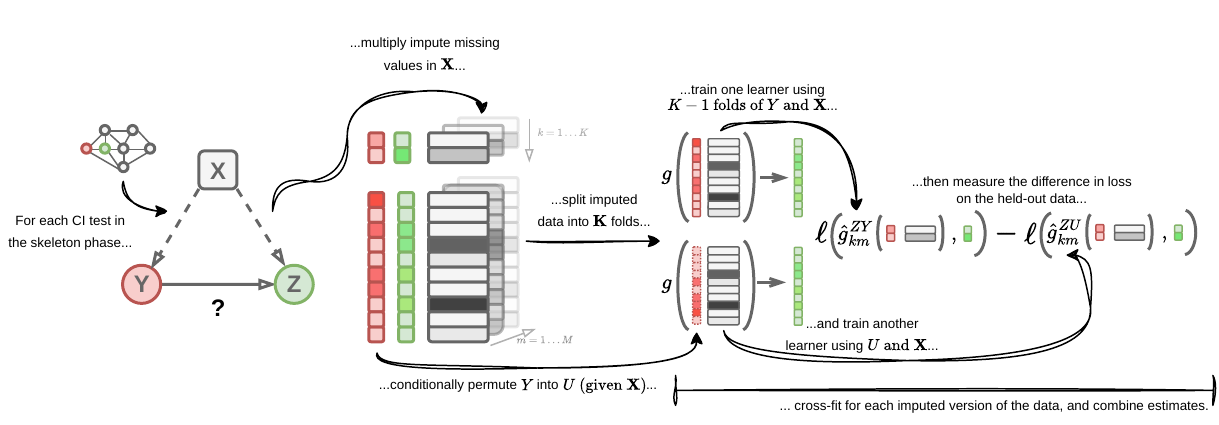}
\end{figure}

\paragraph{Step 1: Multiple imputation.} We generate $M$ completed datasets $\hat{D}^{(1)}, \ldots, \hat{D}^{(M)}$ by multiply imputing missing values in $\Xb$. In principle, $Z$ and $Y$ should be excluded from the imputation model in the per-query setting (i.e., when imputing separately for each CI test) to avoid spurious associations between $\hat{\Xb}$ and other test variables. The cached implementation described in Section~\ref{sec:computation} relaxes this restriction for computational tractability, with the added benefit of improved calibration. Our default imputation method is MICE (Multiple Imputation by Chained Equations) with Bayesian ridge regression \citep{van2011mice}.

\paragraph{Step 2: Cross-validated model comparison.} We partition each completed dataset $\hat{D}^{(m)}$ into $K$ folds. For each fold $k$, we train two models of $Z$ on the remaining $K-1$ folds: a full model $\hat{g}^{ZY}_{km}: (\Xb, Y) \to Z$, which includes both $Y$ and $\Xb$ as predictors; and a partial model $\hat{g}^{ZU}_{km}: (\hat{X}, U_k) \to Z$, which replaces $Y$ with a fold-specific placebo column $U_k$ constructed by conditionally permuting $Y$ among its $k_{\text{nn}}$ nearest neighbors in $\hat{X}$-space on the training fold $T_{km}$ \citep{berrett2020conditional}.\footnote{The neighborhood size is $k_{\textup{nn}} = \max(2, \lfloor n^{2/(d+2)} \rfloor)$, where $d = |\Xb|$. This is the minimax-optimal bandwidth for nonparametric density estimation in dimension $d$ \citep{tsybakov2009introduction}, ensuring that $U \mid \hatX$ converges to $Y \mid \hatX$ at the rate required by the exchangeability argument in Proposition~\ref{prop:calibration}.} Conditional permutation preserves $Y$--$\Xb$ dependence while breaking the $Y$--$Z \mid \Xb$ link, achieving exact asymptotic size (Proposition~\ref{prop:calibration}). The placebo ensures that both models receive the same number of input features, equalizing finite-sample regularization behavior. Since both models are evaluated on the same $\hatX$, any imputation-induced bias affects them equally and is removed in the loss difference instead of contributing to the test statistic. When the conditioning set is empty or neighborhood size $k_{nn} < 2$, we apply unconditional random permutation.

\paragraph{Step 3: Loss comparison.} We evaluate the full and partial models on held-out fold $k$ of imputed dataset $m$, computing the difference in out-of-sample loss:
\begin{equation}
    \hat{\mu}_{km} = \ell\!\left(\hat{g}^{ZU}_{km}\right) - \ell\!\left(\hat{g}^{ZY}_{km}\right)
    \label{eq:fold_diff}
\end{equation}
where $\ell$ is binary cross-entropy for discrete $Z$ ($\leq 20$ unique values) or mean squared error for continuous $Z$. A positive value indicates that the full model outperforms the partial model because $Y$ carries predictive information about $Z$ beyond $\Xb$.

\paragraph{Step 4: Combined test statistic.} The overall loss difference is estimated by averaging across folds and imputations: \begin{equation}
    \hat{\mu} = (KM)^{-1} \sum_k \sum_m \hat{\mu}_{km}.
\end{equation}
To account for dependence arising from both fold overlap and variation across imputed datasets, we combine within-imputation cross-validation variance $\bar{W}$ \citep[Theorem~4]{bayle2020cross} with between-imputation variation $B$ using Rubin’s rules \citep{rubin1987multiple}:
\begin{equation}
    T = \bar{W} + (1 + 1/M) B
\end{equation}
where
\begin{equation}
    \bar{W} = M^{-1} \sum_m n^{-1} K^{-1} \sum_k \hat{\sigma}^2_{k,m} \quad  \text{and} \quad B = (M-1)^{-1} \sum_m (\hat{\mu}_m - \hat{\mu})^2.
\end{equation}
The PAIR-CI test statistic $\tCI = \hat{\mu}/\sqrt{T}$ is compared against a $t$-distribution with Barnard--Rubin--adjusted degrees of freedom. We use a one-tailed test, since evidence against $H_0$ always requires the full model to outperform the partial model.

\paragraph{Choice of learner.} In Step~2, any supervised learner satisfying modest stability conditions (Theorem~\ref{thm:bridge}) can be used to train the full and partial models. In subsequent validation experiments, we implement two variants of PAIR-CI: a \textit{general} variant based on random forests, which meet these conditions and exhibit universal consistency \citep{scornet2015consistency}; and a \textit{fast} variant based on ExtraTrees \citep{geurts2006extremely} with early stopping, deployed at larger graph sizes where the PC algorithm requires hundreds of CI tests.\footnote{A comparison of random forests with ExtraTrees and other learners is conducted in Appendix~\ref{app:learner_comparison}. Feature bagging is adjusted to guarantee that the candidate variable is considered at every split (Appendix~\ref{app:implementation}).} Implementation details are provided in Appendices~\ref{app:implementation} and~\ref{app:early_stopping}.

\subsection{PC Integration for Causal Discovery}
\label{sec:computation}

We insert PAIR-CI into the PC skeleton phase as a drop-in oracle, making one modification for computational efficiency: given the high cost of re-imputing $\Xb$ for all $O(p^2)$ CI tests, we cache $M$ completed versions of $D$ upfront and reuse them across all tests, masking test variables from each query. Counterintuitively, including $Z$ and $Y$ in the cached imputation model improves robustness to MNAR by allowing their correlation with missing components of $\Xb$ to be absorbed into $\hatX$, thereby reducing residual imputation bias $\kappa$ (Appendix~\ref{app:kappa}). Adversarial stress tests across six synthetic DGPs with MNAR missingness confirm that caching maintains calibration (Appendix~\ref{app:adversarial_kappa}). For DAGs with $p \in \{5, 10, 20\}$, the cached and per-query strategies agree on 96.0--97.1\% of CI decisions (Appendix~\ref{app:kappa}). Full pseudocode for the modified PC algorithm is presented in Appendix~\ref{app:implementation}.

\section{Theoretical Foundations and Guarantees}
\label{sec:theory}

We formalize the structural source of miscalibration in the impute-then-test paradigm, before establishing calibration and consistency guarantees for PAIR-CI and showing that its combined variance estimator delivers asymptotically exact inference. Full proofs are given in Appendix~\ref{app:proofs}.

\begin{proposition}[Miscalibration of impute-then-test]
    \label{prop:impossibility}
    Consider the null $H_0\colon Z \indep Y \mid \Xb$, where $\Xb$ contains missing entries (MAR or MNAR), and let $\hatX = (\Xb_{\mathrm{obs}}, \hatX_{\mathrm{miss}})$ denote the completed conditioning set. Suppose that (a) the imputation model is asymptotically misspecified: $\|P(\hatX_{\mathrm{miss}} \mid \Xb_{\mathrm{obs}}) - P(\Xb_{\mathrm{miss}} \mid \Xb_{\mathrm{obs}})\|_{\textup{TV}} \geq \varepsilon$ for some $\varepsilon > 0$ and all sufficiently large $n$; and (b) imputation error induces spurious dependence ($Z \nindep Y \mid \hatX$ under the imputed distribution). Then, any consistent CI test satisfies $\lim_{n \to \infty} P_{H_0}(T(\hat{D}_n) \textup{ rejects}) = 1$.
\end{proposition}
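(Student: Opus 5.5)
The argument is a reduction: the test never sees the law of $(Z,Y,\Xb)$, only the imputed law of $(Z,Y,\hatX)$. We therefore (i) identify what the imputed sample is a sample from, (ii) show that this distribution lies in the alternative of the CI problem the test actually runs, and (iii) invoke consistency of the test.

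\textbf{Step 1: the imputed data-generating law.} For each $n$, let $Q_n$ be the joint law of one completed row $(Z_i, Y_i, \hatX_i)$. It is obtained by composing the true law of $(Z,Y,\Xb,R)$, where $R$ is the missingness indicator, with the imputation kernel $P(\hatX_{\mathrm{miss}} \mid \Xb_{\mathrm{obs}}, \cdot)$. Assumption (a) fixes how this kernel departs from the true conditional. The only property of (a) I will use is that $Q_n$ is bounded away from any law under which $\hatX$ reproduces $\Xb$ in distribution. By itself (a) forces nothing about CI, since misspecification alone need not create dependence. Its role is to ensure that the dependence in (b) does not vanish asymptotically. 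Accordingly, I read (b) as holding under a limiting law $Q$ with $Q_n \to Q$: in $Q$ we have $Z \nindep Y \mid \hatX$, and a functional such as the conditional mutual information $I_Q(Z;Y\mid\hatX) \ge \delta > 0$ stays uniformly positive along the sequence.

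\textbf{Step 2: the scientific null is the test's alternative.} Any test $T$ applied to $\hat{D}_n$ targets the hypothesis $Z \indep Y \mid \hatX$ for the distribution generating $\hat{D}_n$. Under $H_0\colon Z \indep Y \mid \Xb$ on the true law, Step 1 gives $Q \in H_1^{\text{int}}$, the alternative of this internal problem. The proposition defines a consistent CI test as one whose power tends to $1$ at every fixed alternative, or more strongly uniformly over alternatives separated by $\delta$. Under that definition, $P_{Q_n}(T \text{ rejects}) \to 1$. Since the event ``$T(\hat D_n)$ rejects'' has the same probability under $P_{H_0}$ as under $Q_n$, the conclusion follows.

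\textbf{Main obstacle.} The delicate point is that the imputed rows are not i.i.d. from a fixed law. A fitted imputation model (for example MICE) depends on the whole sample, so rows are exchangeable but dependent, and $Q_n$ varies with $n$. I would first try to condition on the fitted imputation parameters $\hat\theta_n$. Given $\hat\theta_n$, the imputations are conditionally independent across rows. If $\hat\theta_n \to \theta^\ast$ in probability, which is the asymptotic misspecification in (a), then the conditional law of the completed sample converges to the i.i.d. law under $Q = Q_{\theta^\ast}$. Consistency of $T$, together with a continuity argument (power is continuous in total variation on product measures at rate $n\cdot\mathrm{TV}$, which requires $\hat\theta_n$ to converge fast enough), then transfers rejection probability $\to 1$. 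If that rate cannot be secured, the fallback is to assume uniform consistency over $\{Q : I_Q(Z;Y\mid\hatX)\ge\delta\}$. This sidesteps the continuity issue, because each $Q_n$ lies in that class by Step 1.
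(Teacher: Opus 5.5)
Your central reduction is exactly the single-imputation branch of the paper's Step~2: the test sees only the law of $(Z,Y,\hatX)$, condition~(b) puts that law in the alternative of the internal problem, and consistency forces rejection. You are also right that (a) does no logical work. The paper's Step~1 only checks that (a) holds under MNAR and under functionally misspecified MAR; it never derives (b) from (a). Your ``uniformly positive along the sequence'' reading of (b) makes explicit what the paper assumes when it asserts that $\hat D_n$ converges in distribution to draws from $\hat P$.

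\textbf{The gap: multiple imputation.} The paper treats this case separately, and it is the case the proposition is invoked for when explaining FZ-Rubin's inflation in Section~\ref{sec:exp_ci}. Your argument is phrased for a single completed sample with row law $Q_n$. A Rubin-pooled test is instead a function of $M$ completed datasets that share the same observed values. ``Consistency'' has no standard meaning for such an object, and a priori the between-imputation term might absorb the imputation error and restore calibration. You need the step that rules this out:
\begin{itemize}
\item All $M$ imputations converge to the same incorrect conditional. Hence each within-imputation estimate $\hat\beta_m$ of the targeted dependence functional converges to a common $\beta^\ast$, which is nonzero under (b) whenever the test is consistent against this alternative. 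The pooled $\bar\beta$ converges to the same $\beta^\ast$.
\item $B$ reflects only the sampling variability of the imputation draws, so $B = O_P(n^{-1})$, the same order as $\bar W$. Thus $T = \bar W + (1+1/M)B = O_P(n^{-1})$.
\item For fixed $M \ge 2$ the reference critical value stays bounded in probability, so $\bar\beta/\sqrt{T}$ diverges.
\end{itemize}
Systematic bias is invisible to $B$; that is the content of the proposition for Rubin's rules.

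\textbf{Your ``main obstacle''.} The paper does not address this at all, so this is extra care rather than a shortfall relative to it. Still, neither of your remedies works as stated.
\begin{itemize}
\item The product-TV bound needs per-row discrepancy $o(n^{-1})$. Any root-$n$ imputer has $\|\hat\theta_n - \theta^\ast\| \asymp n^{-1/2}$, so $n \cdot \mathrm{TV} \to \infty$.
\item Conditioning on $\hat\theta_n$ does not make the rows i.i.d., because $\hat\theta_n$ is a function of the observed rows.
\item Uniform consistency over i.i.d.\ laws says nothing about an exchangeable but dependent sample. The fact that the one-row marginal $Q_n$ lies in the separated class therefore does not help.
\end{itemize}
The right tool is contiguity. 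Compare the actual law of the completed sample with the oracle law obtained by imputing the same observed data from the kernel at $\theta^\ast$; the oracle sample is genuinely i.i.d.\ from $Q_{\theta^\ast}$. The two laws share the observed-data marginal. For an imputer that, given the observed data, draws rows independently from a fitted kernel, they are conditionally product measures at parameters $\hat\theta_n$ and $\theta^\ast$. Under LAN-type regularity of the kernel, the log-likelihood ratio is tight under the oracle law, with subsequential limits $L$ satisfying $\E[e^{L}] = 1$. By Le Cam's first lemma the actual law is then contiguous with respect to the oracle law. The acceptance probability tends to $0$ under the oracle law by consistency, and therefore also under the actual law.
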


\begin{remark}[When the conditions of Proposition~\ref{prop:impossibility} hold]
    \label{rem:conditions}
    Condition~(a) holds both under MNAR, where standard imputation procedures are asymptotically misspecified because the ignorability assumption---$P(\Xb_{\mathrm{miss}} \mid \Xb_{\mathrm{obs}}, R) = P(\Xb_{\mathrm{miss}} \mid \Xb_{\mathrm{obs}})$---fails by definition; and under MAR when the imputation model is misspecified in functional form (for example, a Bayesian ridge imputation strategy in nonlinear conditions). Condition~(b) is often satisfied in causal discovery applications: $\Xb$ is chosen as a candidate separating set for $Z$ and $Y$, so when $\Xb_{\mathrm{miss}}$ lies on an active path between $Z$ and $Y$, imputation error fails to block this path and induces spurious dependence. Proposition~\ref{prop:impossibility} is structural, serving to motivate PAIR-CI by isolating the failure mode that its paired design is built to neutralize.
\end{remark}

\paragraph{Assumptions.} To characterize when PAIR-CI avoids the miscalibration established in Proposition~\ref{prop:impossibility}, we require subsets of the following conditions: (A1)~\emph{imputation consistency}: the imputation procedure converges to the true conditional distribution of the data, implying that influence functions and their variances stabilize across imputed datasets (Appendix~\ref{app:bridge}); (A2)~\emph{bounded loss}: the loss function $\ell \in [0,B]$ for some $B < \infty$; (A3)~\emph{proper imputation}: imputed values are drawn from a posterior predictive distribution that correctly incorporates parameter uncertainty \citep{rubin1987multiple}; (A4)~\emph{learner stability}: the conditional variance convergence condition of \citet[Theorem~3]{bayle2020cross}, which holds for random forests with bounded loss \citep{scornet2015consistency}; (A5)~\emph{universal consistency}: the learner converges to the Bayes-optimal prediction function for any distribution as $n \to \infty$; and (A6)~\emph{faithfulness}: every CI relation in the observed distribution corresponds to a $d$-separation in the true DAG.

\begin{proposition}[Calibration]
    \label{prop:calibration}
    Under Assumptions (A1)--(A4), and applying Theorem~\ref{thm:bridge} to the combined variance estimator, the test described in Section~\ref{sec:ci_test} (PAIR-CI) has exact asymptotic size: $P(\tCI > t_{\alpha,\nu}) \to \alpha$. The proof proceeds in two steps: exchangeability under conditional permutation establishes calibration for the internal null $Z \indep Y \mid \hatX$ under Assumptions~(A2)--(A4); and Assumption~(A1) extends this result to the scientific null $Z \indep Y \mid \Xb$.
\end{proposition}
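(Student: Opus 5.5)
The proof has two steps, as the statement indicates. First we show $\hat{\mu}$ is asymptotically centered at zero under the internal null $Z \indep Y \mid \hatX$. Then we show $\tCI$ is asymptotically pivotal using the combined variance $T$ from Theorem~\ref{thm:bridge}. Finally, Assumption~(A1) transfers the internal null to the scientific null.

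\emph{Step 1 (centering under the internal null).} Fix an imputation $m$ and a fold $k$. Under $Z \indep Y \mid \hatX$, the joint law of $(Z, \hatX, Y)$ equals that of $(Z, \hatX, \tilde Y)$, where $\tilde Y$ is an independent draw from $P(Y \mid \hatX)$. The placebo $U_k$ is a nearest-neighbour conditional permutation of $Y$ with $k_{\mathrm{nn}} = \max(2,\lfloor n^{2/(d+2)}\rfloor)$. Following \citet{berrett2020conditional}, I would bound the total variation distance between the law of $(Z,\hatX,U_k)$ and that of $(Z,\hatX,Y)$ by a term vanishing faster than $n^{-1/2}$. This is where the bandwidth choice enters. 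Once that bound holds, the full learner $\hat g^{ZY}_{km}$ and the partial learner $\hat g^{ZU}_{km}$ are trained on asymptotically exchangeable inputs with an equal number of features. By (A2) the loss is bounded, so the expected loss difference $\E[\hat\mu_{km}]$ equals $o(n^{-1/2})$. Averaging over $k$ and $m$ gives $\sqrt{n}\,\E[\hat\mu] \to 0$.

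\emph{Step 2 (asymptotic normality and variance).} Within each imputation, (A2) and the stability condition (A4) let us invoke the cross-validation CLT of \citet[Theorems~3--4]{bayle2020cross}. This gives $\sqrt{n}(\hat\mu_m - \mu_m)/\sigma_m \Rightarrow N(0,1)$, with $n^{-1}K^{-1}\sum_k \hat\sigma^2_{k,m}$ consistent for $\sigma_m^2/n$. Under (A3), Rubin's framework treats the $\hat\mu_m$ as proper draws. Theorem~\ref{thm:bridge} then states that $T = \bar W + (1+1/M)B$ consistently estimates the total variance of $\hat\mu$, and that $\hat\mu/\sqrt{T}$ is asymptotically $t_\nu$ with the Barnard--Rubin degrees of freedom. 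Combining this with Step 1 via Slutsky's theorem yields $P(\tCI > t_{\alpha,\nu}) \to \alpha$ under the internal null.

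\emph{Step 3 (scientific null).} Under (A1), the imputed conditional law of $\hatX$ converges to the true law of $\Xb$, so the joint distribution of $(Z, Y, \hatX)$ converges to that of $(Z,Y,\Xb)$. Hence $Z \indep Y \mid \Xb$ implies that the internal null holds in the limit. Any residual dependence appears only as an $o(n^{-1/2})$ perturbation of $\E[\hat\mu]$. This uses bounded loss again, together with the fact that both models receive the same $\hatX$: a common shift in $\hatX$ enters both losses and cancels to first order. Slutsky's theorem then gives the same limit.

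The main obstacle is Step 1, specifically the rate. Both the permutation-approximation error and the imputation-approximation error must be $o(n^{-1/2})$ relative to $\sqrt{T}$, which is itself of order $n^{-1/2}$. Otherwise the bias does not vanish after standardization. My first approach would be to bound $|\E[\hat\mu_{km}]|$ by $B$ times the total variation distance between $(\hatX, U_k)$ and $(\hatX, Y)$. I would then control that distance with the nearest-neighbour conditional-permutation bound, which is of the order of the pointwise smoothing error at bandwidth $k_{\mathrm{nn}}$. If that rate is insufficient, I would instead state the result as size at most $\alpha$ up to this bias term, and require a strengthened (A1) for exactness.
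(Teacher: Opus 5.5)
Your three steps follow the paper's proof: exchangeability gives a zero-mean loss difference, Theorem~\ref{thm:bridge} is applied with $\mu=0$, and (A1) identifies the internal and scientific nulls. However, the ingredient you add in Step~1 cannot be completed as planned.

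\textbf{How the paper handles Step~1.} The paper's proof needs no rate here; its footnote on $k_{\mathrm{nn}}$ gestures at the mechanism you propose, but the proof does not use it. Instead it takes the conditional permutation to give $U$ \emph{exactly} the conditional law of $Y$ given $\hatX$, as in the idealised conditional permutation of \citet{berrett2020conditional}. Then:
\begin{itemize}
\item under the internal null the datasets $(Z_i,\hatX_i,Y_i)_{i\le n}$ and $(Z_i,\hatX_i,U_i)_{i\le n}$ are equal in law;
\item the full and partial fits have identically distributed losses, so $\E[\hat\mu_{km}]=0$ for every $n$;
\item $\mu=0$ is fed straight into Theorem~\ref{thm:bridge}.
\end{itemize}

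\textbf{Why your version fails.} You treat exchangeability as approximate, which is a fair reading of $k$-NN binning, and try to get $\sqrt n\,\E[\hat\mu]\to0$ from the loss bound times a total-variation distance. This fails for two reasons.
\begin{itemize}
\item $\hat\mu_{km}$ is a bounded function of the whole sample, since both learners are trained on all of $T_{km}$. The distance that controls $\E[\hat\mu_{km}]$ is therefore between the joint laws of the two $n$-row datasets, not between single rows.
\item With $k_{\mathrm{nn}}\asymp n^{2/(d+2)}$ the bins have radius $r\asymp(k_{\mathrm{nn}}/n)^{1/d}=n^{-1/(d+2)}$. This is a density-estimation bandwidth, not a root-$n$ one, and the mismatch accumulates over rows.
\end{itemize}
To see the second point concretely, let $Y\mid\hatX=x\sim\mathcal N(f(x),1)$ with $f$ smooth with non-vanishing gradient, and let $\bar f_{b(i)}$ be the average of $f$ over the bin of $i$. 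The statistic $S=\sum_i\{f(\hat x_i)-\bar f_{b(i)}\}W_i$ has conditional mean of order $nr^2$ when $W=Y$ and mean $0$ when $W=U$. Its standard deviation is of order $\sqrt n\,r$ in both cases. It therefore separates the two joint laws at signal-to-noise ratio $\sqrt n\,r=n^{1/2-1/(d+2)}\to\infty$, so their total variation tends to $1$. Your bound then yields nothing, not even $\E[\hat\mu]\to0$. The point where ``the bandwidth choice enters'' is exactly where the argument breaks, and your fallback proves a weaker statement than the proposition.

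\textbf{What would be needed.} To get the proposition as stated along the paper's lines, you must adopt its premise of exact conditional exchangeability. Otherwise you need a structurally different argument, which your proposal does not supply. One possible starting point: under the internal null both extra columns are independent of $(Z_i)_{i\le n}$ given $\hatX_{1:n}$. They then differ only in how a pure-noise feature co-varies with $\hatX$, and you would bound the effect of that difference on the expected loss through learner stability.

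\textbf{Step~3.} This has the same defect in milder form. (A1) carries no rate. The shared-shift cancellation also does not touch the term that fails to cancel: the information $Y$ carries about the imputation residual $\delta$ given $\hatX$, which is the $\kappa_Y\kappa_Z$ term of Decomposition~\ref{dec:kappa}. The paper claims only that the two nulls coincide under (A1), with no $o(n^{-1/2})$ statement.

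\textbf{Step~2.} Theorem~\ref{thm:bridge} asserts an $\mathcal N(0,1)$ limit, not a $t_\nu$ limit. The Barnard--Rubin $\nu$ is bounded above by the fixed complete-data value $K-1$. A normal limit compared with $t_{\alpha,\nu}$ therefore gives limiting size strictly below $\alpha$. Your Step~2 reaches $\alpha$ only by attributing a $t_\nu$ limit to the theorem. Such a limit with bounded $\nu$ also cannot coexist with your claim that $T$ consistently estimates the variance of the asymptotically normal $\hat\mu$. The paper's own last step makes the same jump, from a normal pivot plus a heavier-tailed reference to size exactly $\alpha$.
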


\begin{remark}[How the paired design achieves MNAR robustness]
    \label{rem:paired_robust}
    Under Assumptions~(A2)--(A4), conditional permutation ensures that $U \mid \hatX \approx_d Y \mid \hatX$. Extending this calibration guarantee to the scientific null $Z \indep Y \mid \Xb$ requires imputation consistency (Assumption~A1), which holds under MAR but not MNAR. The residual bias $\kappa$ decomposes into three factors: imputation error, the residual correlation of $Y$ with the unrecovered component of $\Xb$, and the residual correlation of $Z$ with that component (Decomposition~\ref{dec:kappa}). If any factor vanishes, $\kappa = 0$. The paired comparison eliminates imputation bias shared by both models, while caching further reduces $\kappa$ by absorbing the test variables' correlation with $\Xb_{\textup{miss}}$ into $\hatX$. Crucially, this cancellation occurs when error varies randomly across observations---as with MICE under MNAR---but fails when it is systematic, as with mean or marginal imputation, where both models receive the same directional bias. Together, these design choices keep $\kappa$ small in practice, as corroborated in Figure~\ref{fig:kappa_fpr_trace}, Appendix~\ref{app:adversarial_kappa}.
\end{remark}

\begin{proposition}[Consistency]
    \label{prop:consistency}
    Under Assumptions (A1)--(A2) and (A4)--(A5), PAIR-CI is consistent under $H_1$: $P(\tCI > t_{\alpha,\nu}) \to 1$ as $n \to \infty$.\footnote{Assumption (A5) is satisfied by, for instance, random forests with $n_{\min} \to \infty$ and $n_{\min}/n \to 0$ \citep{scornet2015consistency}.}
\end{proposition}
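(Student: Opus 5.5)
The plan is to show that under $H_1$ the numerator $\hat{\mu}$ converges in probability to a strictly positive constant $\mu^\star>0$, while the denominator $\sqrt{T}$ converges to zero at rate $n^{-1/2}$. Then $\tCI$ diverges to $+\infty$ in probability, and since $t_{\alpha,\nu}$ stays bounded (it is at most the $t_1$ quantile and tends to $z_\alpha$ as $\nu$ grows), $P(\tCI>t_{\alpha,\nu})\to 1$.

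\textbf{Step 1: identify the population gap.} Under (A1) the completed data $\hatX$ converge in distribution to the true $\Xb$, so $H_1\colon Z\nindep Y\mid \Xb$ carries over to the imputed distribution in the limit. I will work with the regression function (for MSE) or class-probability function (for cross-entropy) of $Z$ given $(\Xb,Y)$ versus given $(\Xb,U)$. Here $U\mid \Xb \approx_d Y\mid\Xb$, and conditional permutation makes $U$ conditionally independent of $Z$ given $\Xb$ asymptotically, so the Bayes predictor using $U$ is just $E[Z\mid \Xb]$. Because the loss is a strictly proper scoring rule, the Bayes risk gap
\[
\mu^\star=R(E[Z\mid\Xb])-R(E[Z\mid \Xb,Y])
\]
is strictly positive whenever $E[Z\mid\Xb,Y]\neq E[Z\mid\Xb]$ on a set of positive measure. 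This is the one delicate point: CI could fail only in higher moments while the conditional mean agrees. I would handle it by reading $H_1$ as dependence detectable by the loss, i.e.\ a nonzero risk gap. For cross-entropy with discrete $Z$ this is automatic, since the full conditional law is then the relevant object.

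\textbf{Step 2: consistency of $\hat{\mu}$.} By (A5) applied on each training fold, whose size is $(K-1)n/K\to\infty$, both $\hat g^{ZY}_{km}$ and $\hat g^{ZU}_{km}$ converge in $L^2$ to their respective Bayes predictors. By (A2) the loss is bounded, so dominated convergence plus a conditional law of large numbers on the held-out fold gives $\hat{\mu}_{km}\to_p\mu^\star$ for each $(k,m)$. Since $K$ and $M$ are fixed, the same holds for the average $\hat{\mu}$. The imputation index is harmless because (A1) makes each $\hat D^{(m)}$ asymptotically distributed as the complete data.

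\textbf{Step 3: the variance estimator is $O_p(n^{-1})$.} By (A2) every $\hat\sigma^2_{k,m}$ is bounded by $B^2$, so $\bar W\le B^2/n$. For $B$, I invoke Theorem~\ref{thm:bridge} (using (A4)) for the $\sqrt n$ asymptotic linearity of each $\hat\mu_m$ around a common limit. Under (A1) that limit is the same $\mu^\star$ across imputations, so $\hat\mu_m-\hat\mu=O_p(n^{-1/2})$ and hence $B=O_p(n^{-1})$. Therefore $T=O_p(n^{-1})$ and $\tCI\ge c\sqrt n\,\mu^\star(1+o_p(1))\to\infty$.

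A subtlety is that $T$ could be exactly zero or degenerate. I would adopt the convention that $T$ is floored at a small positive value, or simply note that divergence still holds. I expect the main obstacle to be Step 1: deriving a strictly positive risk gap from conditional dependence, with Step 3's uniform control of the between-imputation term as a secondary issue.
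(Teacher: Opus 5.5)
Your proposal follows the paper's proof essentially step for step: a strictly positive Bayes-risk gap, reached by the trained models via (A5) and carried over to the imputed data via (A1), drives the numerator to $\Delta^*>0$, while $T=O(1/n)$ sends $\tCI\to\infty$. Your Step~1 caveat is a real point that the paper glosses over, since its proof simply asserts $R^*<R^*_0$ under $H_1$, and that fails for squared-error loss when $\E[Z\mid\Xb,Y]=\E[Z\mid\Xb]$ despite $Z\nindep Y\mid\Xb$. Two small repairs are needed: take the within-imputation $\sqrt{n}$-linearity from Lemma~\ref{lem:within_ral} rather than Theorem~\ref{thm:bridge}, which requires (A3) (not assumed here); and note that the Barnard--Rubin $\nu$ with complete-data df $K-1$ neither grows with $n$ nor need exceed $1$, so $t_{\alpha,\nu}=O_p(1)$ must instead come from the fraction of missing information staying away from $1$. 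That is where your rate $nB=O_p(1)$, together with $n\bar W\xrightarrow{P}\sigma^2_*>0$, is actually needed; for $\tCI\to\infty$ alone, $T\xrightarrow{P}0$ suffices.
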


\begin{theorem}[Unified inference under cross-validation and multiple imputation]
    \label{thm:bridge}
    Suppose that Assumptions (A1)--(A4) hold. Let $\mu = \lim_{n \to \infty} \E[\hat{\mu}_m]$ denote the common population loss difference. As $n \to \infty$ with $K$ and $M$ fixed,
    \begin{equation}
        \frac{\hat{\mu} - \mu}{\sqrt{T}} \xrightarrow{d} \mathcal{N}(0, 1),
    \end{equation}
    with degrees of freedom $\nu$ given by the Barnard--Rubin adjustment \citeyearpar{barnard1999small} to correct for finite $M$.
\end{theorem}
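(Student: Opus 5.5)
The argument combines the within-imputation cross-validation CLT of \citet{bayle2020cross} with a Rubin-style decomposition of $\hat{\mu}-\mu$ into within- and between-imputation parts. Write $\hat{\mu} = M^{-1}\sum_m \hat{\mu}_m$ with $\hat{\mu}_m = K^{-1}\sum_k \hat{\mu}_{km}$. Condition on the observed data $D_{\mathrm{obs}}$ and let $\bar{\mu}(D_{\mathrm{obs}}) = \E[\hat{\mu}_m \mid D_{\mathrm{obs}}]$ be the expected cross-validated loss difference over the imputation posterior. Then
\begin{equation*}
\hat{\mu}-\mu = \bigl(\bar{\mu}(D_{\mathrm{obs}}) - \mu\bigr) + M^{-1}\textstyle\sum_m \bigl(\hat{\mu}_m - \bar{\mu}(D_{\mathrm{obs}})\bigr).
\end{equation*}
The first term is the complete-data sampling fluctuation, corrected for missing information. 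The second term averages $M$ draws that are conditionally i.i.d.\ given $D_{\mathrm{obs}}$, by the independence of the imputation draws.

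\textbf{Step 1 (within-imputation asymptotics).} For each fixed $m$, apply \citet[Theorems~3--4]{bayle2020cross} to $\hat{D}^{(m)}$. By (A2) and (A4), $\sqrt{n}(\hat{\mu}_m - \tilde{\mu}_m) = n^{-1/2}\sum_i h_m(\hat{D}_i^{(m)}) + o_p(1)$, where $h_m$ is the loss-difference influence function and $\tilde{\mu}_m$ its centering. The same theorems give $n^{-1}K^{-1}\sum_k \hat{\sigma}^2_{k,m}$ as a ratio-consistent estimator of $\mathrm{Var}(h_m)/n$. By (A1), $h_m$ and its variance stabilize across $m$ to a common $h$ and $\sigma^2$, so $\bar{W}/(\sigma^2/n) \to_p 1$.

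\textbf{Step 2 (Rubin decomposition).} Let $W_\infty = \sigma^2/n$ be the complete-data variance and $B_\infty = \mathrm{Var}(\hat{\mu}_m \mid D_{\mathrm{obs}})$ the between-imputation variance. Properness (A3) is the condition of \citet{rubin1987multiple} that gives $\mathrm{Var}(\bar{\mu}(D_{\mathrm{obs}})) \approx W_\infty + B_\infty$, to first order. With the asymptotic linearity from Step 1, the two terms above are asymptotically independent normals with variances $W_\infty + B_\infty$ and $B_\infty/M$. Hence $(\hat{\mu}-\mu)/\sqrt{W_\infty+(1+1/M)B_\infty} \to_d \mathcal{N}(0,1)$. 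The between-imputation deviations are themselves linear in the influence functions, so the Lindeberg conditions follow from boundedness (A2).

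\textbf{Step 3 (plug-in and degrees of freedom).} Combine $\bar{W}$ from Step 1 with $B$, which is unbiased for $B_\infty$. Slutsky's lemma then yields the stated limit when $B_\infty = o(W_\infty)$. Otherwise $B$ is a scaled $\chi^2_{M-1}$ variable because $M$ is fixed, so the pivot is $t$-distributed rather than normal. The Barnard--Rubin $\nu$ is the Satterthwaite approximation for this mixture, and it reduces to $\mathcal{N}(0,1)$ as $\nu\to\infty$.

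\textbf{Main obstacle.} The hardest step is Step 2: showing that Bayle et al.'s influence-function expansion holds uniformly over imputation draws. That requires the remainder terms to be $o_p(n^{-1/2})$ conditionally on $D_{\mathrm{obs}}$, even though the learners are trained on imputed data whose distribution depends on $D_{\mathrm{obs}}$. I would first prove the stability condition (A4) conditionally, using that (A1) makes the imputed-data distribution converge, and then integrate over $D_{\mathrm{obs}}$ by dominated convergence using (A2).
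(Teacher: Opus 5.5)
Your skeleton matches the paper's: the Bayle et al.\ CLT and variance estimator within each completed dataset (the paper's within-imputation RAL lemma), Rubin's rules across datasets, and Barnard--Rubin for finite $M$. Where you differ is in how you justify the pooling step.

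\textbf{The paper's route.} It never decomposes $\hat{\mu}-\mu$. It argues that asymptotic linearity lets $\hat{\mu}_m$ be recast as an estimating-equation estimator, then cites Theorem~4.1 of Robins and Wang. The cross-imputation regularity conditions (R1)--(R2) do the work of the uniformity you flag as your main obstacle. These conditions are uniform convergence of the $\bar{h}_m$ to a common $\bar{h}^*$ and of $\sigma^2_m$ to $\sigma^2_*$, and the paper simply asserts that they follow from (A1).

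\textbf{Your route.} You condition on $D_{\mathrm{obs}}$ and split off a conditionally i.i.d.\ Monte Carlo average, which is Rubin's classical proper-imputation argument. Your plan of conditional stability plus dominated convergence would actually argue for what (R1)--(R2) only posit.

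\textbf{What each buys.} Your route is more transparent. Step~3 correctly shows that with $M$ fixed, the literal $\mathcal{N}(0,1)$ limit needs $B_\infty=o(W_\infty)$. Otherwise $nB$ converges to a nondegenerate multiple of $\chi^2_{M-1}/(M-1)$, and the pivot is only approximately $t_\nu$. That is the sensible reading of the theorem's degrees-of-freedom clause, whereas the paper's proof asserts the normal limit at fixed $M$ outright. The paper's route is shorter, but it leaves implicit which conditions are being verified.

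\textbf{Two caveats.} First, the identity $\mathrm{Var}(\bar{\mu}(D_{\mathrm{obs}})-\mu)\approx W_\infty+B_\infty$ is Rubin's frequentist properness \emph{for this particular statistic}. Posterior-predictive sampling (A3) does not guarantee it on its own; it requires, roughly, congeniality between the imputation model and the cross-validated random-forest analysis. You should state it as an assumption rather than derive it from (A3). The paper's appeal to Robins and Wang rests on the same unverified step. Second, to reproduce the paper's reference distribution, Step~3 should specify the complete-data degrees of freedom entering Barnard--Rubin; the paper uses $K-1$.
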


\paragraph{Proof sketch.} Drawing on the asymptotic normality and variance results of \citet{bayle2020cross}, the cross-validated loss difference $\hat{\mu}_m$ is asymptotically linear with a provably consistent variance estimate for each imputed dataset $m$ (Lemma~\ref{lem:within_ral}). Under asymptotic linearity, $\hat{\mu}_m$ admits an estimating equation representation, placing it within the semiparametric framework of \citet{robins2000inference} (even with a nonparametric learner). When all imputed datasets converge to the same distribution, as implied by Assumption~(A1), the conditions of \citeauthor{robins2000inference}'s Theorem~4.1 hold and Rubin's rules yield a pivotal normal limit. The $t_\nu$ reference distribution with Barnard--Rubin degrees of freedom accounts for the finite number of imputations $M$.

\begin{corollary}[PC consistency]\label{cor:pc_consistency}
    Under Assumptions \emph{(A1)--(A6)}, the PC algorithm equipped with a PAIR-CI oracle recovers the true CPDAG: $\hat{G} \xrightarrow{P} G_{\mathrm{CPDAG}} \quad \text{as } n \to \infty$. This follows from Theorem~1 of \citet{kalisch2007estimating}, which guarantees recovery when the CI oracle has correct asymptotic size, power converges to~1, and the distribution is faithful to the underlying DAG. Propositions~\ref{prop:calibration} and~\ref{prop:consistency} establish these conditions for \textsc{PAIR-CI}.
\end{corollary}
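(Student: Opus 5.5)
The corollary is a direct application of Theorem~1 of \citet{kalisch2007estimating}. That theorem needs three things: faithfulness of the observed distribution to the DAG; a CI oracle whose type~I error vanishes on every true independence queried by PC; and a type~II error that vanishes on every true dependence. Faithfulness is Assumption~(A6). The plan is therefore to convert the per-test guarantees of Propositions~\ref{prop:calibration} and~\ref{prop:consistency} into a statement that, with probability tending to one, every CI decision made during the skeleton phase matches the d-separation oracle. Orientation (v-structures and Meek's rules) is a deterministic function of the skeleton and the recorded separating sets. Hence on this event the output equals the oracle PC output, which is $G_{\mathrm{CPDAG}}$ by the classical correctness of PC under faithfulness.

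\textbf{Step 1: finiteness of the query set.} With $p$ fixed, PC queries at most finitely many triples $(X_i, X_j, \mathbf{S})$, at most $p^2 2^{p-2}$ of them. The set of queries actually issued depends on earlier decisions. I would handle this by taking a union over the full finite universe $\mathcal{Q}$ of possible queries: if every test in $\mathcal{Q}$ returns the oracle answer, then the data-driven PC run coincides with the oracle run. This reduces the problem to showing $P(\text{some } q\in\mathcal{Q} \text{ errs}) \to 0$, which by the union bound follows if each individual error probability tends to $0$.

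\textbf{Step 2: per-test errors.} For $q$ with $H_1$ true, Proposition~\ref{prop:consistency} under (A1), (A2), (A4), (A5) gives $P(\tCI \le t_{\alpha,\nu}) \to 0$ directly. For $q$ with $H_0$ true, Proposition~\ref{prop:calibration} gives only $P(\tCI > t_{\alpha,\nu}) \to \alpha$, not $0$. This is the main obstacle: a fixed $\alpha$ does not make the type~I error vanish. I would follow Kalisch--B\"uhlmann and let the level depend on $n$, with $\alpha_n \to 0$ slowly. Under $H_0$, Theorem~\ref{thm:bridge} shows that $\tCI$ is asymptotically pivotal, so $P(\tCI > t_{\alpha_n,\nu}) \to 0$ provided the normal approximation is uniform in the tail. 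A weaker fallback avoids this uniformity: any fixed threshold $c$ gives limiting size $1-\Phi(c)$, and a diagonal argument yields a sequence $c_n\to\infty$ with vanishing size. Under $H_1$, Proposition~\ref{prop:consistency} needs strengthening to $\hat\mu \to \mu>0$ in probability with $T = O_P(n^{-1})$. Then $\tCI$ grows like $\sqrt n$, and power still tends to $1$ provided $t_{\alpha_n,\nu}=o(\sqrt n)$. This strengthening is available from the proof of Theorem~\ref{thm:bridge} together with the within-imputation variance scaling $\bar W = O_P(1/n)$. Choosing $\alpha_n\to0$ with $\Phi^{-1}(1-\alpha_n)=o(\sqrt n)$ reconciles both requirements.

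\textbf{Step 3: caching.} The cached implementation imputes once and reuses the imputed data across queries. Step~2 only needs per-query marginal error rates, not independence across tests, because the union bound places no requirement on dependence between tests. The cached imputations satisfy (A1) and (A3) for every query's conditioning set, so the per-query results apply unchanged.

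Combining Steps~1--3 with (A6), $P(\hat G = G_{\mathrm{CPDAG}}) \to 1$, which is convergence in probability for a discrete-valued estimator.
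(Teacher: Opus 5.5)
Your argument is sound and takes a different, more careful route than the paper. The paper's proof is a one-line appeal to Theorem~1 of \citet{kalisch2007estimating}. It feeds in Proposition~\ref{prop:calibration} as ``asymptotically correct size (at most $\alpha$)'' and Proposition~\ref{prop:consistency} as ``power $\to 1$'', both at a fixed level $\alpha$.

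You instead prove the fixed-$p$ reduction directly: a finite query universe, a union bound, and the fact that orientation is determined by the CI decisions and recorded separating sets. This does not depend on the Gaussian partial-correlation setting in which Kalisch--B\"uhlmann's results are proved. You also correctly spot that fixed-level calibration cannot give $P(\hat G = G_{\mathrm{CPDAG}}) \to 1$. By Proposition~\ref{prop:calibration} itself, each true independence that PC queries is rejected with limiting probability $\alpha > 0$. A non-adjacent pair with a single available separating set (e.g.\ the endpoints of a three-node chain) therefore keeps a spurious edge with non-vanishing probability. Kalisch and B\"uhlmann themselves use a level $\alpha_n \to 0$.

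The paper's route buys brevity, but read literally its conclusion does not follow from the premises it cites. Yours gives a valid proof, at two costs. First, it establishes the result for PC run at a vanishing level rather than at the fixed $\alpha = 0.05$ used in the experiments. Second, it needs the rate form of Proposition~\ref{prop:consistency} ($\tCI$ of order $\sqrt{n}$ under $H_1$), which the paper's proof of that proposition does supply.

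Two refinements. Under $H_0$ you need neither a uniform tail approximation nor a diagonal argument. Since $t_{\alpha_n,\nu} \ge \Phi^{-1}(1-\alpha_n) \to \infty$ and $\tCI$ is tight, $P(\tCI > t_{\alpha_n,\nu}) \le P(\tCI > c)$ eventually for every fixed $c$. So the size vanishes for \emph{any} $\alpha_n \to 0$.

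Under $H_1$, however, your closing rate condition uses the wrong quantile. The reference distribution is $t_\nu$ with Barnard--Rubin $\nu$. With $M$ fixed and $K-1$ substituted for the complete-data degrees of freedom, $\nu < K-1$ stays bounded. Hence $t_{\alpha_n,\nu}$ grows like $\alpha_n^{-1/\nu}$, not like $\Phi^{-1}(1-\alpha_n)$. For example, $\alpha_n = e^{-n^{1/4}}$ satisfies $\Phi^{-1}(1-\alpha_n) = o(\sqrt{n})$. Yet with $K=5$, $t_{\alpha_n,\nu} \ge t_{\alpha_n,4}$, which grows like $e^{n^{1/4}/4} \gg \sqrt{n}$, so power fails.

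The condition you wrote first, $t_{\alpha_n,\nu} = o_P(\sqrt{n})$, is the right one. It holds whenever $\log(1/\alpha_n) = o(\log n)$, e.g.\ $\alpha_n = 1/\log n$. On $\{\nu \ge \varepsilon\}$ one then has $t_{\alpha_n,\nu} \le t_{\alpha_n,\varepsilon} = n^{o(1)}$, and $P(\nu < \varepsilon)$ is small for small $\varepsilon$ provided $\nu$ is bounded away from $0$ in probability.
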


\section{Experiments}
\label{sec:experiments}

We validate PAIR-CI in three stages. First, we assess the test in isolation, measuring calibration and power across missingness mechanisms and functional forms (Section~\ref{sec:exp_ci}). Second, we benchmark graph recovery on synthetic DAGs with $p=10$--$30$ variables (Section~\ref{sec:exp_graph}). Third, we extend benchmarking to two real-world network topologies of greater scale and complexity: ALARM ($p = 37$) and HAILFINDER ($p = 56$; Section~\ref{sec:exp_realworld}). A smaller, approximately linear benchmark (Sachs, $p = 11$) is discussed in Appendix~\ref{app:sachs}.

\subsection{Standalone Performance}
\label{sec:exp_ci}

\paragraph{Setup.} We test $H_0$ under three DGPs: \emph{linear Gaussian} ($Z$ linear in $\Xb$ plus noise); \emph{post-nonlinear} ($Z = \sigma(\Xb^\top \beta + \varepsilon)$); and \emph{latent confounder} ($Z = \sin(X_1) + X_2^2 + 0.5L + \varepsilon_z$, $Y = \text{signal} \cdot L + \cos(X_3) + \varepsilon_y$, $L \sim \mathcal{N}(0,1)$ unobserved). Signal strength---the coefficient governing $Y$'s influence on $Z$ (or $L$ in the latent confounder DGP)---varies in $\{0, 0.3, 0.6, 1.0\}$, with $n \in \{500, 1{,}000, 2{,}000, 5{,}000\}$ and $|\Xb| \in \{2, 5, 10\}$. Each configuration is repeated 100 times. Three missingness regimes are considered: complete data (no missingness), MAR, and MNAR, with approximately 30\% missingness in the latter two. We compare PAIR-CI against four baselines: Fisher's $Z$ with single imputation (FZ-single) and Rubin-pooled multiple imputation (FZ-Rubin); GCM; and KCI. Each method is applied to both complete cases and multiply imputed data.

\paragraph{Results.} Table~\ref{tab:calibration} reports rejection rates under $H_0$ averaged across DGPs and sample sizes. PAIR-CI maintains an average false positive rate $\leq 4\%$ across all missingness conditions. Only one DGP $\times$ mechanism cell slightly exceeds the nominal level (at 7\%)---still an order of magnitude below the 41--59\% inflation observed for FZ-single, FZ-Rubin, GCM, and KCI (Table~\ref{tab:calibration_by_dgp}, Appendix~\ref{app:proofs}). All imputation-based baselines fail under MNAR, with DGP-averaged false positive rates of 30--45\%. GCM and KCI are also miscalibrated on complete data (22\% and 28\%, respectively). FZ-Rubin partially attenuates the inflation of single-imputation methods but achieves nominal performance only on the latent-confounder DGP (5.1\% MAR, 4.9\% MNAR). Under the linear-Gaussian (15.0\%) and post-nonlinear (13.7\%) DGPs, where a Bayesian ridge imputer is misspecified, FZ-Rubin’s MAR false positive rate increases monotonically with $n$, confirming that pooling test statistics does not restore calibration when bias enters through the imputed conditioning set (Proposition~\ref{prop:impossibility}).

\begin{table}[t]
    \centering
    \caption{\textbf{Standalone calibration.} Average rejection rate under $H_0$ (signal $= 0$) across three data-generating processes (linear Gaussian, post-nonlinear, and latent confounder) and four sample sizes (500, 1{,}000, 2{,}000, 5{,}000) against a nominal level of $\alpha = 0.05$. Each method is applied to multiply imputed data unless indicated otherwise. Values exceeding $\alpha$ are shown in \textcolor{red}{red}.}
    \label{tab:calibration}
    \begin{tabular}{lccc}
    \toprule
    Method                   & Complete               & MAR                    & MNAR                   \\
    \midrule
    PAIR-CI                   & 0.018                  & 0.021                  & \textbf{0.036}         \\
    FZ-single                 & 0.050                  & \textcolor{red}{0.155} & \textcolor{red}{0.350} \\
    FZ-Rubin                  & 0.049                  & \textcolor{red}{0.113} & \textcolor{red}{0.282} \\
    GCM (Imputed)             & \textcolor{red}{0.215} & \textcolor{red}{0.313} & \textcolor{red}{0.415} \\
    KCI (Imputed)             & \textcolor{red}{0.278} & \textcolor{red}{0.340} & \textcolor{red}{0.447} \\
    \bottomrule
\end{tabular}

\end{table}

Power curves across DGPs, signal strengths, and missingness mechanisms are depicted in Appendix~\ref{app:power_plots}. While deferring discussion to Section~\ref{sec:discussion}, we note that PAIR-CI exceeds 80\% power at signal $\geq 0.6$ and $n \geq 2{,}000$ under all DGPs, and that power comparisons with miscalibrated baselines are uninformative: a test that rejects 45\% of true nulls provides little evidence when it rejects under the alternative (which may simply reflect a false positive).

\subsection{Synthetic Graph Recovery}
\label{sec:exp_graph}

\paragraph{Setup.} We generate 10 random Erd\H{o}s--R\'enyi DAGs at three graph sizes: $p=10$ (edge probability 0.25, 3 incomplete variables), $p=20$ (edge probability 0.2, 6 incomplete variables), and $p=30$ (edge probability 0.15, 8 incomplete variables). We sample $n = 1{,}000$ observations with linear Gaussian and nonlinear edges, inject 30\% missingness under MAR, MNAR, and mixed mechanisms, and obtain 20 replicates per graph for each condition. We compare five methods: PAIR-CI (with imputation caching), complete-case PC, test-wise deletion, Fisher's $Z$-based PC applied to multiply imputed data with majority-vote edge aggregation (FZ-vote), and FZ-Rubin as the principled alternative. KCI and RCIT are excluded due to their miscalibration on complete data (Table~\ref{tab:calibration}) and prohibitive $O(n^3)$ per-test cost. We deploy the general variant of PAIR-CI ($M = 5$, $K = 10$) for $p = 10$ and the fast variant ($M = 5$, $K = 5$) for $p \geq 20$.\footnote{For $K>10$, further power gains require a stronger learner or larger $n$ rather than additional imputations (Appendix~\ref{app:m_sensitivity}).} All constraint-based methods share a common PC implementation, enabling us to isolate the CI test's effect.

\paragraph{Results.} Table~\ref{tab:scaling} summarizes performance across scales. At $p = 10$, PAIR-CI yields SHD~11 with nonlinear edges under all mechanisms, below the constraint-based baselines at SHD~12 (test-wise deletion, complete case, and FZ-Rubin) and within 1 unit of MissDAG's score-based estimate on the skeleton (total SHD is not comparable because MissDAG outputs a fully oriented DAG rather than a CPDAG).\footnote{See Appendix~\ref{app:robustness} for per-mechanism breakdowns with both linear and nonlinear edges.} With linear edges, PAIR-CI ties test-wise deletion at SHD~6, as expected given the optimality of Fisher's $Z$ for Gaussian partial correlations.\footnote{Indeed, on the linear Sachs benchmark ($p=11$), PAIR-CI trails the constraint-based baselines by 2–4 SHD (Appendix~\ref{app:sachs}), reflecting the conservatism of the Barnard–Rubin pivot when $|\Xb|$ is small and signal is moderate.}

At $p=20$ and $p=30$, the picture changes for both edge types. In the nonlinear case, PAIR-CI's advantage widens monotonically: the
SHD gap over the best baseline grows from 1 unit at $p=10$ to 5 at $p=20$ to 10 at $p=30$, where test-wise deletion (SHD 68), complete-case analysis (SHD 68), FZ-Rubin (SHD 73), and FZ-vote (SHD 75) all substantially underperform PAIR-CI (SHD 58). In the linear setting, PAIR-CI overtakes test-wise deletion by $p=30$ (SHD 53 vs.\ 58) as conditioning sets grow larger and partial correlations degrade in higher dimensions. FZ--Rubin tracks test-wise deletion closely at all scales (SHD 60 vs.\ 58 linear, 73 vs.\ 68 nonlinear at $p=30$) despite pooling across imputations rather than dropping incomplete observations. Performance across missingness mechanisms is stable for all methods at $p = 30$ (PAIR-CI: 56--59, test-wise: 68--73, complete-case: 67--73), though SHD alone understates the contrast. The apparent competitiveness of complete-case analysis under heavy missingness is an artifact of collapsing recall: as more rows are dropped, fewer edges are declared, artificially suppressing SHD. This ``winning by giving up'' pattern is revealed by the method's consistently lower F1 score (0.53 vs.\ 0.61 for PAIR-CI; Appendix~\ref{app:robustness}).

\begin{table}[t]
    \centering
    \caption{\textbf{Synthetic graph recovery at scale}. Total SHD {\scriptsize (skeleton SHD)} across graph sizes ($n = 1{,}000$; medians over missingness conditions, 10 Erd\H{o}s--R\'enyi DAGs $\times$ 20 missingness draws per graph per condition). Skeleton SHD counts missing and extra edges; total SHD additionally penalizes orientation errors. $^\dagger$As MissDAG outputs a fully oriented DAG rather than a CPDAG, total SHD is not directly comparable to that of constraint-based methods (and therefore shown in gray); skeleton SHD provides the appropriate like-for-like metric. Lowest total SHD among constraint-based methods is shown in bold.}
    \label{tab:scaling}
    \begin{tabular}{lcccccc}
    \toprule
                  & \multicolumn{3}{c}{Linear} & \multicolumn{3}{c}{Nonlinear}                                             \\
    \cmidrule(lr){2-4} \cmidrule(lr){5-7}
    Method        & $p{=}10$                   & $p{=}20$                      & $p{=}30$          & $p{=}10$          & $p{=}20$          & $p{=}30$          \\
    \midrule
    PAIR-CI                                           & \textbf{6}\,{\scriptsize(1)} & \textbf{31}\,{\scriptsize(14)} & \textbf{53}\,{\scriptsize(28)} & \textbf{11}\,{\scriptsize(7)} & \textbf{35}\,{\scriptsize(23)} & \textbf{58}\,{\scriptsize(38)} \\
    Complete case                                     & 6\,{\scriptsize(2)}       & 33\,{\scriptsize(17)}     & 61\,{\scriptsize(36)} & 12\,{\scriptsize(7)} & 40\,{\scriptsize(27)} & 68\,{\scriptsize(50)} \\
    Test-wise                                         & \textbf{6}\,{\scriptsize(2)} & 32\,{\scriptsize(15)}     & 58\,{\scriptsize(29)} & 12\,{\scriptsize(7)} & 40\,{\scriptsize(25)} & 68\,{\scriptsize(44)} \\
    FZ-vote                                           & 7\,{\scriptsize(3)}       & 33\,{\scriptsize(15)}     & 60\,{\scriptsize(29)} & 13\,{\scriptsize(8)} & 43\,{\scriptsize(27)} & 75\,{\scriptsize(48)} \\
    FZ-Rubin         & 7\,{\scriptsize(2)}       & 32\,{\scriptsize(15)}     & 60\,{\scriptsize(29)} & 12\,{\scriptsize(8)} & 41\,{\scriptsize(27)} & 73\,{\scriptsize(47)} \\
    \midrule
    \multicolumn{7}{l}{\emph{Score-based (DAG output):}} \\
    MissDAG$^\dagger$     & \textcolor{lightgray}{5\,{\scriptsize(5)}} & \textcolor{lightgray}{24\,{\scriptsize(23)}} & \textcolor{lightgray}{36\,{\scriptsize(34)}} & \textcolor{lightgray}{10\,{\scriptsize(10)}} & \textcolor{lightgray}{32\,{\scriptsize(32)}} & \textcolor{lightgray}{54\,{\scriptsize(53)}} \\
    \bottomrule
\end{tabular}

\end{table}

\paragraph{Precision--recall tradeoff.} PAIR-CI exhibits high precision (1.000 at $p=10$; 0.87--0.90 at $p=20$; 0.86--0.89 at $p=30$) at the cost of reduced recall. Nevertheless, it attains superior SHD because the baseline methods' worse precision introduces spurious edges that propagate errors in accordance with Meek's rules. In terms of skeleton SHD, PAIR-CI matches or outperforms MissDAG at every scale, while under nonlinearity the latter's conservative skeleton yields substantially lower recall (0.26 vs.\ 0.47 at $p=30$; F1 0.39 vs.\ 0.61). The full precision--recall tradeoff is visualized in Appendix~\ref{app:pr_scatter}.

\paragraph{Computational cost.} PAIR-CI's runtime scales approximately linearly with the number of CI tests in the PC skeleton search, which grows with graph size. The fast variant reduces per-test cost relative to the general variant: using the latter, $\sim$70 tests at $p=10$ complete in $\sim$3 minutes per replicate; with the former, $\sim$550 tests at $p=20$ and $\sim$1{,}000 tests at $p=30$ finish in $\sim$5 and $\sim$10 minutes, respectively. At $p=56$ in the subsequent HAILFINDER analysis (Section~\ref{sec:exp_realworld}), $\sim$2{,}400 tests take $\sim$100 minutes per replicate, with a stable per-test time of 2.5\,s. Larger graphs benefit from parallelization or more aggressive early stopping (Appendix~\ref{app:implementation}).

\paragraph{Robustness.} Appendix~\ref{app:robustness} reports results across missingness rates of 10--50\% and sample sizes $n \in \{500, 1{,}000, 2{,}000\}$ at $p = 10$ under nonlinear edges and MNAR. PAIR-CI achieves SHD~9--10 in all nine conditions, uniformly improving on test-wise deletion and FZ-vote (SHD~10--11),  with no degradation as missingness or sample size grows. Complete-case analysis matches PAIR-CI in SHD but records substantially lower F1. FZ-vote exhibits the clearest deterioration with increasing $n$ (SHD~10$\to$11), since larger samples raise the probability that its miscalibrated CI test detects spurious associations.

\subsection{Scaling to Real-World Network Topologies}
\label{sec:exp_realworld}


\paragraph{ALARM medical diagnostic network.}
We next turn to the more complex ALARM network \citep{beinlich1989alarm} ($p=37$, 46 edges), injecting 10--40\% missingness into 10 non-root variables under MAR, MNAR, and mixed mechanisms (20 replicates). Table~\ref{tab:realworld} presents median SHD with 20\% missingness. In the linear Gaussian case, PAIR-CI yields SHD 22.5--24, surpassing all benchmarks (complete-case analysis: SHD 34--40; test-wise deletion: SHD 33--34; FZ-vote and FZ-Rubin: SHD 30.5--36). In the nonlinear setting, where parametric methods are less suitable, the gap roughly doubles: PAIR-CI registers SHD 31--33.5 as test-wise deletion degrades to SHD 48--52, complete-case analysis to SHD 46--56.5, FZ-vote to SHD 53.5--62, and FZ-Rubin to SHD 53--56. Full results across missingness rates and mechanisms are provided in Appendix~\ref{app:alarm}.

\paragraph{HAILFINDER weather forecasting network.}
In the final and largest benchmark, the HAILFINDER network \citep{abramson1996hailfinder} ($p=56$, 66 edges), we simulate data using nonlinear structural equations and induce 20--40\% missingness in 15 non-root variables (20 replicates). At 20\% missingness (Table~\ref{tab:realworld}), PAIR-CI delivers SHD 62.5--65.5 across all three missingness mechanisms, compared to 103.5--112.5 for test-wise deletion and 120--128 for FZ-vote and FZ-Rubin---roughly four times the 10-unit advantage at $p = 30$.\footnote{Note that \citet{witte2022multiple} explicitly rule FZ-Rubin out of scope for MNAR data.}  Complete-case analysis illustrates the ``winning by giving up'' pattern most starkly, attaining SHD~66 (within 2 units of PAIR-CI's 64) but with median F1 of exactly 0.000, i.e., an empty recovered skeleton. In contrast, PAIR-CI maintains F1 $\geq 0.56$ under all conditions.

\begin{table}[t]
    \centering
    \caption{\textbf{Real-world graph recovery.} Median SHD, with interquartile range in brackets, at 20\% missingness (20 replicates) on the ALARM ($p{=}37$, linear Gaussian edges) and HAILFINDER ($p{=}56$, nonlinear edges) network topologies. $\dagger$The cell shown in gray has median F1 $= 0.000$ (empty recovered skeleton; Appendix~\ref{app:hailfinder}), indicating that its SHD reflects degenerate output rather than competitive recovery.}
    \label{tab:realworld}
    \begin{tabular}{lcccccc}
    \toprule
                  & \multicolumn{3}{c}{ALARM ($p{=}37$)} & \multicolumn{3}{c}{HAILFINDER ($p{=}56$)} \\
    \cmidrule(lr){2-4} \cmidrule(lr){5-7}
    Method        & MAR   & MNAR  & Mixed  & MAR    & MNAR   & Mixed \\
    \midrule
    PAIR-CI (fast)  & \textbf{24.0}\,{\scriptsize[4]} & \textbf{22.5}\,{\scriptsize[4]} & \textbf{24.0}\,{\scriptsize[8]} & \textbf{62.5}\,{\scriptsize[8]} & \textbf{64.0}\,{\scriptsize[8]} & \textbf{65.5}\,{\scriptsize[10]} \\
    Complete case   & 34.0\,{\scriptsize[7]} & 40.0\,{\scriptsize[7]} & 38.5\,{\scriptsize[7]} & 98.0\,{\scriptsize[14]} & \textcolor{lightgray}{66.0\,{\scriptsize[0]}}$^\dagger$ & 89.5\,{\scriptsize[27]} \\
    Test-wise       & 33.0\,{\scriptsize[7]} & 34.0\,{\scriptsize[7]} & 33.0\,{\scriptsize[5]} & 112.5\,{\scriptsize[10]} & 103.5\,{\scriptsize[9]} & 104.5\,{\scriptsize[10]} \\
    FZ-vote         & 30.5\,{\scriptsize[6]} & 36.0\,{\scriptsize[5]} & 35.5\,{\scriptsize[6]} & 128.0\,{\scriptsize[11]} & 122.0\,{\scriptsize[9]} & 124.5\,{\scriptsize[11]} \\
    FZ-Rubin & 30.5\,{\scriptsize[5]} & 35.0\,{\scriptsize[6]} & 34.0\,{\scriptsize[6]} & 121.0\,{\scriptsize[14]} & 120.0\,{\scriptsize[17]} & 120.0\,{\scriptsize[14]} \\
    \bottomrule
\end{tabular}

\end{table}

\section{Discussion and Conclusion}
\label{sec:discussion}

Our experimental evidence suggests that PAIR-CI addresses a key gap in constraint-based causal discovery, a cornerstone of automated scientific inference. Whereas existing approaches are miscalibrated under common forms of misspecified imputation, PAIR-CI achieves false positive rates near nominal levels across missingness mechanisms, edge types, and graph sizes. The primary cost is reduced power relative to well-specified parametric tests when their assumptions hold---that is, when relationships are approximately linear and missingness is either MCAR or MAR. The Barnard--Rubin degrees-of-freedom adjustment inflates finite-sample critical values relative to the asymptotic normal pivot, leading to under-rejection at low signal strengths and small $n$ (Appendix~\ref{app:m_sensitivity}). Under these conditions, a correctly specified Fisher's $Z$ test outperforms PAIR-CI on calibration and power (Appendix~\ref{app:sachs}). Where parametric assumptions are violated or the missingness mechanism is unverifiable, however, PAIR-CI's calibration advantage outweighs its loss in power, opening a performance gap that widens with scale.\footnote{For sample-size planning, Appendix~\ref{app:power_planning} shows that 80\% power at moderate signal ($0.6$) and $|\Xb| = 5$ requires $n \approx 500$, $1{,}000$, or $2{,}000$ for the linear Gaussian, latent confounder, and post-nonlinear DGPs, respectively.}

All missing-data methods rely on assumptions about the mechanism and functional form of missingness that cannot be verified from the observed data. PAIR-CI is no exception, requiring---among other, less demanding conditions---that the imputation model converge to the true conditional distribution of missing values. What distinguishes our approach is both that this dependency is made explicit as a formal assumption~(A1) and that its violation is tolerated when imputation error varies randomly across observations (Remark~\ref{rem:paired_robust}). As a result, PAIR-CI's calibration advantage extends to two regimes where Rubin's rules fail: MNAR, where pooling test statistics carries no theoretical guarantee; and MAR with functionally misspecified imputation (Section~\ref{sec:exp_ci}). False positive rates inflate only when imputation-induced distortion is systematic rather than random---as with mean or marginal imputation (Appendix~\ref{app:imputation_degradation}), where paired cancellation fails because both models receive the same directional bias---or under adversarial DGPs where the imputer cannot approximate the true conditional (Appendix~\ref{app:adversarial_kappa}).

Two scope restrictions deserve mention. First, PAIR-CI assumes causal sufficiency: all common causes of any pair of variables are themselves part of the dataset. Agnostic to the choice of discovery algorithm, PAIR-CI can, in principle, serve as a drop-in oracle for FCI \citep{spirtes2000causation} in settings with latent confounders. Second, Theorem~\ref{thm:bridge} is proved for general cross-validated statistics computed over multiply imputed datasets but empirically illustrated only for CI testing. Extending this result to other applications, such as cross-validated model selection with incomplete data, is a natural direction for future work. More broadly, PAIR-CI complements score-based approaches such as MissDAG---which achieves lower SHD when its linear Gaussian assumptions are satisfied---by recovering more structure under nonlinearity (Section~\ref{sec:exp_graph}). Like all constraint-based methods, however, PAIR-CI requires domain validation before deployment in high-stakes contexts: the recovered graph represents a set of CI relations consistent with the data, not a confirmed causal structure.



\section*{Code and Data Availability}

All code and data required to reproduce this paper's results are included in the supplementary material. A public repository will be released upon acceptance. Implementation is compatible with the \texttt{causal-learn} library in Python.

\bibliographystyle{plainnat}
\bibliography{references}

\newpage
\appendix
%

\part*{\Large Appendices}

\section{Proofs of Propositions 1--3 and Corollary 5}
\label{app:proofs}

\subsection{Notation and Setup}
Let $\hat{D}^{(m)}$ for $m = 1, \ldots, M$ denote multiply imputed versions of $D$, and let $\{F_1, \ldots, F_K\}$ be a partition of $\{1, \ldots, n\}$ into $K$ folds of approximately equal size $n_k = n/K$. For imputed dataset $m$ and fold $k$, denote the test set as $F_k$ and training set as $T_{km} = \hat{D}^{(m)} \setminus F_k$.

The fold-level loss difference is
\begin{equation}
\hat{\mu}_{km} = \frac{1}{n_k} \sum_{i \in F_k} 
\left[ 
\ell\!\left(z_i,\, \hat{g}^{ZU}_{km}(\hat{x}_i, u_{ik})\right) - 
\ell\!\left(z_i,\, \hat{g}^{ZY}_{km}(\hat{x}_i, y_i)\right) 
\right]
\label{eq:app_fold_diff}
\end{equation}
where $u_{ik}$ is the conditionally permuted value of $Y$ assigned to observation $i$ in fold $k$. The per-imputation and overall means are
\begin{equation}
    \hat{\mu}_m = \frac{1}{K} \sum_{k=1}^{K} \hat{\mu}_{km}, \qquad 
    \hat{\mu} = \frac{1}{M} \sum_{m=1}^{M} \hat{\mu}_m.
    \label{eq:app_means}
\end{equation}

\subsection{Proof of Proposition~\ref{prop:impossibility}
(Miscalibration of Impute-then-Test)}

\begin{proof}
\textbf{Step 1: Misspecified imputation induces spurious conditional dependence.}
Condition~(a) arises in two regimes.

\emph{(i) MNAR.} Under MNAR, $P(R = 1 \mid \Xb)$ depends on $\Xb_{\mathrm{miss}}$, so the observed-data conditional $P(\Xb_{\mathrm{miss}} \mid \Xb_{\mathrm{obs}}, R = 1)$ differs from the population conditional $P(\Xb_{\mathrm{miss}} \mid \Xb_{\mathrm{obs}})$. Any imputer fit on observed cases converges to $P(\Xb_{\mathrm{miss}} \mid \Xb_{\mathrm{obs}}, R = 1)$, and the resulting total variation (TV) distance is strictly positive whenever missingness depends non-trivially on $\Xb_{\mathrm{miss}}$, establishing condition~(a).

\emph{(ii) MAR with a misspecified imputer.} Under MAR, the target $P(\Xb_{\mathrm{miss}} \mid \Xb_{\mathrm{obs}})$ is identifiable, but if the imputer's model class fails to contain the true conditional---for example, when Bayesian ridge imputation is used to impute data with nonlinear conditional means---the imputer converges to an in-class approximation at strictly positive TV distance from the truth, again establishing condition~(a).

\textbf{Step 2: Consistent tests detect the spurious association.} The completed data $\hat{D}_n$ converge in distribution to draws from $\hat{P}$, under which the conditional dependence $Z \nindep Y \mid \hatX$ holds by condition~(b). For \emph{single imputation}, a consistent test applied to data from $\hat{P}$ detects this dependence, and the rejection probability converges to $1$. For \emph{multiple imputation with Rubin's rules}, the within-imputation estimates $\hat{\theta}_m$ each converge to the same non-zero population quantity (since all imputed datasets converge to the same incorrect conditional $Q$), implying that the pooled estimate $\bar{\theta} \to \theta^* \neq 0$. As the between-imputation variance $B$ captures only imputation sampling variation---whose contribution to the total variance of $\bar{\mu}$ is $O(n^{-1})$ and vanishes asymptotically---the combined test statistic diverges.
\end{proof}

\subsection{Proof of Proposition~\ref{prop:calibration} (Calibration)}
\begin{proof}
The proof proceeds in two steps: calibration for the internal null 
$Z \indep Y \mid \hatX$ under Assumptions~(A2)--(A4); and extension to the scientific null $Z \indep Y \mid \Xb$ via imputation consistency~(A1).

\paragraph{Exchangeability under the internal null.}
Consider imputed dataset $m$ and fold $k$, and treat the sample $\hatX^{(m)}$ as the conditioning set. Conditional permutation (permuting $Y$ among its $k_{\textup{nn}}$ nearest neighbors in $\hatX$-space) produces a placebo $U$ whose conditional distribution given $\hatX$ matches that of $Y$ given $\hatX$ \citep{berrett2020conditional}. Under the internal null $Z \indep Y \mid \hatX$, the joint distributions of $(\hatX, Z, Y)$ and $(\hatX, Z, U)$ coincide: $Y$ and $U$ are conditionally exchangeable given $\hatX$, and neither carries information about $Z$ beyond $\hatX$. The pairs $(\hatX, Y) \mapsto Z$ and $(\hatX, U) \mapsto Z$ are thus statistically indistinguishable, which entails that the distributions of $\hat{g}^{ZY}_{km}$ and $\hat{g}^{ZU}_{km}$ and their fold-level losses are identical. It follows that
\[
    \E[\hat{\mu}_{km}] = \E\bigl[\ell(Z, \hat{g}^{ZU}_{km}) 
    - \ell(Z, \hat{g}^{ZY}_{km})\bigr] = 0,
\]
which follows purely from exchangeability---no universal consistency or Bayes-optimality assumption for the learner is required. Since averaging over folds and imputations yields $\E[\hat{\mu}] = 0$, the population loss difference $\mu$ targeted by Theorem~\ref{thm:bridge} is also 0 under the internal null.

\paragraph{Asymptotic calibration.}
By Theorem~\ref{thm:bridge}, under Assumptions~(A1)--(A4),
\[
    \frac{\hat{\mu} - \mu}{\sqrt{T}} \xrightarrow{d} \mathcal{N}(0, 1)
\]
as $n \to \infty$ with $K$ and $M$ fixed. Under the null, $\mu = 0$, giving:
\[
    t_{\mathrm{CI}} = \frac{\hat{\mu}}{\sqrt{T}} \xrightarrow{d} \mathcal{N}(0,1).
\]
For finite $M$, the Barnard--Rubin \citep{barnard1999small} $t_\nu$ reference distribution provides a small-sample correction with heavier tails than the asymptotic normal pivot. As a result, $\Pr(t_{\mathrm{CI}} > t_{\alpha,\nu}) \to \alpha$.

\paragraph{From internal null to scientific null.}
The two nulls coincide when $\hatX$ recovers $\Xb$---the content of Assumption~(A1)---and (approximately) when imputation quality is adequate. When imputed values are systematically biased, Proposition~\ref{prop:impossibility} shows that the nulls diverge and calibration fails. PAIR-CI's paired design attenuates but cannot eliminate this effect (Appendix~\ref{app:imputation_degradation}).
\end{proof}

\subsection{Proof of Proposition~\ref{prop:consistency} (Consistency)}
\begin{proof}
Under $H_1: Z \nindep Y \mid \Xb$, $Y$ carries predictive information about $Z$ beyond $\Xb$. Let $R^* = R^*(Z \mid \Xb, Y)$ and $R^*_0 = R^*(Z \mid \Xb)$ denote the Bayes risks of the full and partial prediction tasks, respectively. Under $H_1$, $R^* < R^*_0$, since the additional information in $Y$ strictly improves the optimal prediction.

By Assumption~(A5), the expected losses of the trained full and partial models converge to $R^*$ and $R^*_0$, respectively. Under Assumption~(A1), the imputed datasets converge to the true complete data, so
\[
    \E[\hat{\mu}_m] \to R^*_0 - R^* \equiv \Delta^* > 0 \quad \text{for all } m.
\]
Since the variance of $\hat{\mu}_m$ is $O(1/n)$ by Lemma~\ref{lem:within_ral}, and $T = O(1/n)$, the test statistic satisfies
\[
    t_{\mathrm{CI}} = \frac{\hat{\mu}}{\sqrt{T}} =
    \frac{\Delta^* + o_P(1)}{O(n^{-1/2})} \to \infty
    \quad \text{as } n \to \infty.
\]
Thus, $\Pr(t_{\mathrm{CI}} > t_{\alpha,\nu}) \to 1$.
\end{proof}

\subsection{Proof of Corollary~\ref{cor:pc_consistency} (PC Consistency)}
\begin{proof}
This follows directly from \citet[Theorem~1]{kalisch2007estimating}, under which the PC algorithm recovers the true CPDAG if (i) the CI oracle has asymptotically correct size (at most $\alpha$), (ii) the CI oracle is consistent (power $\to 1$), and (iii) the distribution is faithful to the underlying DAG (Assumption~(A6)). Propositions~\ref{prop:calibration} and~\ref{prop:consistency} establish conditions~(i) and~(ii) for the PAIR-CI oracle.
\end{proof}

\begin{table}[h]
    \centering
    \caption{\textbf{Standalone calibration by DGP and missingness mechanism.} False positive rates at signal $= 0$ averaged over sample sizes $n \in \{500, 1{,}000, 2{,}000, 5{,}000\}$ and $|\Xb| \in \{2, 5, 10\}$ (100 replicates per cell). Complete-data results (the control) are reported in Table~\ref{tab:calibration}. Values exceeding the nominal $\alpha = 0.05$ are shown in red.}    \label{tab:calibration_by_dgp}
    \begin{tabular}{l*{6}{c}}
    \toprule
                  & \multicolumn{2}{c}{Linear Gaussian} & \multicolumn{2}{c}{Post-nonlinear} & \multicolumn{2}{c}{Latent confounder} \\
    \cmidrule(lr){2-3} \cmidrule(lr){4-5} \cmidrule(lr){6-7}
    Method        & MAR & MNAR & MAR & MNAR & MAR & MNAR \\
    \midrule
    PAIR-CI        & 0.042 & \textcolor{red}{0.070} & 0.021 & 0.038 & 0.000 & 0.000 \\
    FZ-single      & \textcolor{red}{0.198} & \textcolor{red}{0.488} & \textcolor{red}{0.195} & \textcolor{red}{0.465} & \textcolor{red}{0.071} & \textcolor{red}{0.097} \\
    FZ-Rubin       & \textcolor{red}{0.150} & \textcolor{red}{0.408} & \textcolor{red}{0.137} & \textcolor{red}{0.388} & \textcolor{red}{0.051} & 0.049 \\
    GCM (Imputed)  & \textcolor{red}{0.456} & \textcolor{red}{0.587} & \textcolor{red}{0.410} & \textcolor{red}{0.565} & \textcolor{red}{0.074} & \textcolor{red}{0.094} \\
    KCI (Imputed)  & \textcolor{red}{0.300} & \textcolor{red}{0.513} & \textcolor{red}{0.407} & \textcolor{red}{0.507} & \textcolor{red}{0.313} & \textcolor{red}{0.320} \\
    \bottomrule
\end{tabular}
    
\end{table}

\section{Bridge Theorem: Cross-Validated Inference on Multiply Imputed Data}
\label{app:bridge}

The variance estimator combines the provably consistent within-imputation estimator of \citet[Theorem~4]{bayle2020cross} with Rubin's rules for pooling across imputed datasets. We link the cross-validation Central Limit Theorem (CLT) of \citet{bayle2020cross} to the large-sample multiple imputation theory of \citet{robins2000inference}, which characterizes inference under Rubin's rules when the within-imputation estimator is regular and asymptotically linear (RAL).

\subsection{Within-Imputation Regularity}

\begin{lemma}[Within-imputation RAL property]
\label{lem:within_ral}
Let $\hat{D}^{(m)}$ denote an imputed dataset, treated as a complete i.i.d.\ sample. Under Assumptions~(A2) and~(A4), the cross-validated loss difference satisfies:
\begin{enumerate}
    \item \textbf{Asymptotic linearity.} There exists an influence function $\bar{h}_m$ 
    with $\E[\bar{h}_m(W)] = 0$ and $\sigma^2_m = \mathrm{Var}(\bar{h}_m(W)) > 0$ 
    such that
    \begin{equation}
        \hat{\mu}_m - \mu_m = \frac{1}{n} \sum_{i=1}^n
        \bar{h}_m(W_i^{(m)}) + o_P(n^{-1/2}).
        \label{eq:within_al}
    \end{equation}
    \item \textbf{CLT.} $\sqrt{n}(\hat{\mu}_m - \mu_m) / \sigma_m 
    \xrightarrow{d} \mathcal{N}(0, 1)$.
    \item \textbf{Consistent variance.} The within-fold variance estimator 
    $\hat{\sigma}^2_{m,\mathrm{in}}$ of \citet[Theorem~4]{bayle2020cross} satisfies 
    $\hat{\sigma}^2_{m,\mathrm{in}} / \sigma^2_m \xrightarrow{P} 1$.
\end{enumerate}
\end{lemma}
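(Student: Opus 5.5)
The plan is to treat each imputed dataset $\hat{D}^{(m)}$ as a fixed i.i.d.\ sample $W^{(m)}_1,\dots,W^{(m)}_n$, with $W=(\hatX, Z, Y)$ augmented by the permutation randomness. Then $\hat{\mu}_m$ is exactly a $K$-fold cross-validated average of a bounded loss contrast, which is the object studied by \citet{bayle2020cross}. I would apply their Theorems~3 and~4 to the single combined loss $h_n(w)=\ell(z,\hat g^{ZU}(\hat x,u))-\ell(z,\hat g^{ZY}(\hat x,y))$, viewed as a function of the training sample, rather than to the two models separately. Set $\bar h_n(w)=\E[h_n(w)\mid\text{training data}]$, averaged over folds, and let $\mu_m$ be the corresponding target, $\E[\bar h_n(W)]$ in the limit.

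\textbf{Asymptotic linearity (part 1).} Decompose
\[
\hat\mu_m-\mu_m=\frac1n\sum_{i}\bigl(\bar h_m(W_i)\bigr)+\frac1n\sum_{k}\sum_{i\in F_k}\bigl(h_{n,k}(W_i)-\bar h_{n,k}(W_i)\bigr)+(\text{centering drift}).
\]
The first term is the i.i.d.\ linear term with influence function $\bar h_m(w)=\lim \E[h_n(w)]-\mu_m$. The second term is the fold-wise deviation of the random training-dependent function from its limit. By (A4), which is the conditional variance convergence condition of Bayle et al.'s Theorem~3, this term has variance $o(1/n)$. Chebyshev then makes it $o_P(n^{-1/2})$. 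Bounded loss (A2) supplies the uniform integrability and the finite second moments needed, and it makes $\sigma_m^2=\mathrm{Var}(\bar h_m(W))<\infty$. Positivity of $\sigma^2_m$ has to be imposed as nondegeneracy, since it is not implied by the assumptions. I would state it explicitly rather than derive it.

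\textbf{CLT (part 2).} This follows from part 1 by the Lindeberg--L\'evy CLT applied to the i.i.d.\ bounded summands $\bar h_m(W_i)$, combined with Slutsky's lemma.

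\textbf{Consistent variance (part 3).} I would invoke Bayle et al.'s Theorem~4. Under (A2), the within-fold empirical variance of $h_{n,k}(W_i)$ over $i\in F_k$ is a sample variance of bounded variables whose conditional mean and variance, by (A4), converge to those of $\bar h_m$. A conditional weak law of large numbers for each fold, followed by averaging over the $K$ folds, then gives $\hat\sigma^2_{m,\mathrm{in}}\to\sigma^2_m$ in probability. Dividing by $\sigma^2_m>0$ gives the ratio statement.

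\textbf{Main obstacle.} The hardest step is justifying that $\hat{D}^{(m)}$ is i.i.d.\ and that the placebo $U$ fits the Bayle framework. The imputer is fit on the whole sample, so rows are dependent through the shared imputation parameters. The nearest-neighbour permutation also couples training rows. I would first condition on the imputation parameters. Under (A1)/(A3) these converge, so conditionally the rows are i.i.d.\ up to an $o_P(n^{-1/2})$ perturbation. For the permutation, I would treat it as training-side randomness only, since $U$ is built on $T_{km}$. Its coupling effect would then be absorbed into the stability condition (A4), because a single row changes only $k_{\mathrm{nn}}=o(n)$ placebo values. This absorption is the step most likely to need an additional explicit assumption.
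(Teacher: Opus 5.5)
Your top-level route is the paper's. The paper gives no argument for this lemma beyond invoking Bayle et al.'s CLT and within-fold variance theorem for the loss contrast on $\hat{D}^{(m)}$, which the statement stipulates to be i.i.d. Your final paragraph, which leans on (A1)/(A3), therefore lies outside the lemma's hypotheses. The only piece of it the argument actually needs is that the test-fold placebos $u_{ik}$ be conditionally independent across $i\in F_k$ given the training fold. Your remark that $\sigma_m^2>0$ must be imposed is also fair.

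The genuine gap is in how you center. You define $\bar h_n(w)=\E[h_n(w)\mid\text{training data}]$, which conditions the wrong way: for fixed $w$ it simply returns $h_n(w)$. You then switch to the limit function $\lim_n\E[h_n(w)]-\mu_m$ and a limiting target $\mu_m$. That choice produces your ``centering drift'' term, which you never bound and which (A2) and (A4) cannot bound. It is a learning-rate quantity, and off the null it is typically not $o(n^{-1/2})$ for forests. Worse, under the internal null with a consistent learner both fits converge to $\E[Z\mid\hatX]$, so your limiting influence function degenerates to zero. Imposing $\sigma_m^2>0$ on it then excludes exactly the case the lemma is used for, and Lindeberg--L\'evy gives a degenerate limit.

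The repair is to keep Bayle et al.'s $n$-indexed objects:
\begin{itemize}
\item the averaged contrast $\bar h_n(w)=\E_T[h_n(w,T)]$, integrating over the training sample $T$;
\item the $K$-fold test error $R_n=K^{-1}\sum_k\E[h_n(W,T_k)\mid T_k]$ as the target;
\item normalization by $\sigma_n^2=\mathrm{Var}(\bar h_n(W))$, which may shrink;
\item Lindeberg--Feller for the resulting triangular array.
\end{itemize}
The uniform-integrability and law-of-large-numbers steps must then hold relative to $\sigma_n^2$. Bounded loss alone no longer supplies this once $\sigma_n\to 0$.

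Second, ``by (A4) this term has variance $o(1/n)$'' is the load-bearing step, and whether it holds depends on what (A4) pins down. Suppose (A4) says $\mathrm{Var}(h_n(W,T)\mid T)$ converges in relative $L^1$ specifically to $\sigma_n^2=\mathrm{Var}(\bar h_n(W))$. Then the step works. Set $\Delta=h_n-\bar h_n$; Fubini and $\E_T[\Delta(w,T)]=0$ kill the cross term, giving $\E_T[\mathrm{Var}(h_n(W,T)\mid T)]=\sigma_n^2+\E_T[\mathrm{Var}(\Delta(W,T)\mid T)]$. This forces the last term to be $o(\sigma_n^2)$. Conditional independence within each test fold, plus Minkowski over the fixed $K$ folds, then gives a remainder of order $o_P(\sigma_n/\sqrt{n})$. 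For part~3, it is this choice of limit, not boundedness, that makes the within-fold estimator target $\mathrm{Var}(\bar h_n(W))$.

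If (A4) only says the conditional variance settles at some sequence, the step fails. Take $h_n(w,T)=f(w)+\epsilon(T)g(w)$, where $f(W)\indep g(W)$ are bounded and $\E[g(W)]=0$. Let $\epsilon(T)\in\{\pm1\}$ be the sign of a centred training-sample sum of a symmetric coordinate of $W$ that is independent of $(f(W),g(W))$. Then:
\begin{itemize}
\item the conditional variance is constant, so this weak form of (A4) holds;
\item yet $\bar h_n=f$, and the remainder $n^{-1}\sum_k\epsilon(T_k)\sum_{i\in F_k}g(W_i)$ has variance exactly $\mathrm{Var}(g(W))/n$;
\item the within-fold estimator converges to $\mathrm{Var}(f(W))+\mathrm{Var}(g(W))\neq\mathrm{Var}(\bar h_n(W))$.
\end{itemize}
So either state (A4) with that specific limit, or assume Bayle et al.'s loss-stability condition $\gamma_{\mathrm{loss}}(h_n)=o(\sigma_n^2/n)$. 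That condition bounds how much the centred test loss moves when one training point is swapped for an independent copy, and it is the hypothesis under which they prove asymptotic linearity.
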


\subsection{Cross-Imputation Regularity and the Bridge}

\begin{assumption}[Cross-imputation regularity]
\label{ass:cross_imp}
The influence functions $\bar{h}_m$ from Equation~\ref{eq:within_al} satisfy:
\begin{enumerate}
    \item[(R1)] \textbf{Uniform convergence.} There exists a limiting influence function $\bar{h}^*$ such that 
    \[
        \sup_w |\bar{h}_m(w) - \bar{h}^*(w)| \xrightarrow{P} 0 
        \quad \text{as } n \to \infty, \ \text{for all } m.
    \]
    \item[(R2)] \textbf{Convergence to a common variance limit.} There exists 
    $\sigma^2_* > 0$ such that $\sigma^2_m \xrightarrow{P} \sigma^2_*$ uniformly in $m$, with $\mathrm{Var}_m(\sigma^2_m) = O(n^{-1})$.
\end{enumerate}
Both conditions follow from imputation consistency~(A1): as all imputed datasets converge to $P$, the influence functions and their variances converge to the common limits $\bar{h}^*$ and $\sigma^2_*$. The $O(n^{-1})$ rate in Remark~(R2) reflects parameter uncertainty in the imputation model \citep{robins2000inference} and can be assessed empirically by verifying that $\mathrm{Var}_m(\hat{\sigma}^2_{m,\mathrm{in}})$ is small across imputations.
\end{assumption}

\begin{proof}[Proof of Theorem~\ref{thm:bridge}]
\emph{Step 1 (Within-imputation CLT).} By Lemma~\ref{lem:within_ral}, for each $m$, $\hat{\mu}_m$ is RAL with a consistent variance estimate 
\[
U_m^{\mathrm{Bayle}} = \hat{\sigma}^2_{m,\mathrm{in}} / n
\]
satisfying $n \cdot U_m^{\mathrm{Bayle}} / \sigma^2_m \xrightarrow{P} 1$.

\emph{Step 2 (Rubin's rules).} Although our analysis model is nonparametric, asymptotic linearity (Lemma~\ref{lem:within_ral}) implies that $\hat{\mu}_m$ is equivalent to an estimator admitting an estimating equation representation, placing it within the semi-parametric framework of \citet{robins2000inference}. Under Assumptions~(A1) and~(A3), the conditions assumed by
\citeauthor{robins2000inference}'s Theorem~4.1 hold, yielding
\[
    \frac{\hat{\mu} - \mu}{\sqrt{T}} \xrightarrow{d} \mathcal{N}(0, 1)
\]
as $n \to \infty$ with $M$ fixed, where $\mu$ denotes the common limit of $\mu_m$ under Remarks~(R1) and~(R2).

\emph{Small-$M$ correction.} The $t_\nu$ critical values correspond to \citeauthor{barnard1999small}'s \citeyearpar{barnard1999small} finite-$M$ correction, which adjusts the degrees of freedom $\nu$ to account for variability in $T$ when $M$ is small. We substitute $K-1$ for the complete-data degrees of freedom, since the effective observations are fold-level loss differences rather than individual data points. As $M \to \infty$, $\nu \to \infty$ and the correction vanishes.
\end{proof}

\section{Imputation Bias Decomposition}
\label{app:kappa}

The PAIR-CI test statistic targets the population loss difference $\mu = \E[\hat{\mu}]$, which vanishes under the internal null $Z \indep Y \mid \hatX$ (Proposition~\ref{prop:calibration}). Calibration for the scientific null $Z \indep Y \mid \Xb$ requires that the two nulls coincide, which Assumption~(A1) enforces asymptotically. Under MNAR, Assumption~(A1) fails and $\mu$ acquires a bias $\kappa$ under the scientific null. Defining $\delta = \Xb_{\textup{miss}} - \hatX_{\textup{miss}}$ as the imputation residual, we now decompose $\kappa$ into three components.

\begin{definition}[Components of imputation bias]
    \label{def:kappa_factors}
    For the imputation residual $\delta$,
    \begin{align*}
      \kappa_{\textup{imp}} &= \sqrt{\E\!\left[\|\delta\|^2\right]} \\
      \kappa_Y &= \big|\textup{corr}\!\left(Y,\delta \mid \hatX\right)\big| \\
      \kappa_Z &= \big|\textup{corr}\!\left(Z,\delta \mid \hatX\right)\big|
    \end{align*}
    denote the imputation-error magnitude and the residual correlations of $Y$ and $Z$ with $\delta$ given $\hatX$, respectively.
\end{definition}

\begin{decomposition}[Qualitative bias decomposition]
    \label{dec:kappa}
    A first-order expansion of the conditional loss difference around $\delta = 0$, assuming $Y \indep Z \mid (\hatX, \delta)$ under the scientific null, yields a leading-order bias proportional to $\kappa_Y \cdot \kappa_Z$. As a sufficient condition,
    \[
    \kappa_{\textup{imp}} = 0 \;\text{or}\; \kappa_Y = 0 \;\text{or}\; 
    \kappa_Z = 0 \;\Longrightarrow\; \kappa = 0.
    \]
    The converse need not hold when $\delta$ is multivariate, since $Y$ and $Z$ may depend on $\delta$ along non-overlapping directions.
\end{decomposition}

Two design choices keep $\kappa$ small. First, the paired comparison fits the full and partial models on the same $\hatX$, causing imputation bias to enter both losses identically and cancel in the difference. The residual $\kappa$ is thus controlled by the \emph{differential} dependence of $Y$ and $Z$ on $\delta$. Second, cached imputation includes $Y$ and $Z$ as predictors, absorbing into $\hatX_{\textup{miss}}$ the components of $\delta$ that are correlated with the test variables and thereby driving $\kappa_Y$ and $\kappa_Z$ toward 0. In practice, $\kappa$ tends to inflate only under adversarial nonlinear DGPs where a linear imputer cannot capture the dependence of the test variables on $\Xb_{\textup{miss}}$ (Figure~\ref{fig:kappa_fpr_trace}, Appendix~\ref{app:adversarial_kappa}).

The expansion underlying Decomposition~\ref{dec:kappa} assumes approximately linear dependence of $Y$ and $Z$ on $\delta$ given $\hatX$. Under strongly nonlinear structures, higher-order terms may dominate, as observed empirically in the hub-nonlinear setting. Separately, the Barnard--Rubin degrees-of-freedom adjustment guards against residual miscalibration at finite $M$ by inflating critical values relative to the asymptotic normal pivot.

\paragraph{Caching bias bound.}
Let $\hatX^{\textup{cache}}$ denote the cached imputation (fit with $Z$ and $Y$ included) and $\hatX^{\textup{excl}}$ the per-query imputation (fit without $Z$ and $Y$). Under $H_0$, the bias of the cached test relative to the per-query test can be defined as
\[
    \kappa_{\textup{cache}} \;=\; \E[\hat{\mu}^{\textup{cache}}] - \E[\hat{\mu}^{\textup{excl}}].
\]
The imputer targets $P(\Xb_{\textup{miss}} \mid \Xb_{\textup{obs}}, Z, Y)$. $Z$ contributes no information about $\Xb$ beyond $(Y, \Xb_{\textup{obs}})$ when $Z \indep \Xb \mid Y$ (a typical configuration when $Z$ is a descendant of $\Xb$). Hence, $Z$'s incremental contribution is controlled by its partial $R^2$ in the imputation model, given $\Xb_{\textup{obs}}$ and $Y$, denoted by $\rho_Z^2$.

For regularized imputers (e.g., Bayesian ridge regression with $p$ candidate predictors), the marginal contribution of a single predictor is $O(1/p)$, on average. The paired comparison cancels the component of imputation error shared by both models (since both use $\hatX^{\textup{cache}}$), leaving a residual bias proportional to $\rho_Z^2 \cdot \textup{signal}$ that vanishes under $H_0$. Under $H_1$, $\kappa_{\textup{cache}}$ shifts in the direction of the alternative, slightly increasing power rather than reducing conservatism.

\paragraph{Empirical validation.}
To validate imputation caching---the strategy of imputing once and using completed datasets across all CI tests---we compare it with a per-query variant that re-imputes while excluding the test variables. We consider $p \in \{5, 10, 20\}$ under MAR, using the general variant of PAIR-CI throughout (20 replicates per setting). Agreement rates are 96.0\%, 97.1\%, and 96.9\%, respectively, with median $p$-value discrepancies of 0.05--0.07. Disagreements do not systematically favor rejection over non-rejection. While decision-level agreement need not imply graph-level agreement, graph recovery results (Section~\ref{sec:exp_graph}) show no performance loss, indicating that this pathology does not arise in practice. We therefore use cached imputation throughout the graph recovery experiments.

\section{Power Plots}
\label{app:power_plots}

Under $H_1$ (signal $> 0$), PAIR-CI exhibits lower power than Fisher's $Z$ on complete Gaussian data---an expected drawback of nonparametric generality. In the linear Gaussian DGP with signal $= 0.3$, the average rejection rate (across missingness mechanisms and $|\Xb| \in \{2, 5\}$) is approximately 0.40 at $n = 500$, compared to nearly 1.00 for Fisher's $Z$ on complete data. Power rises to 0.94--1.00 at signal $= 1.0$ across all DGPs and sample sizes. Rejection rates are lower for the post-nonlinear setting at weak signals, reflecting the more challenging learning problem, while the latent confounder case lies between the two. Across all conditions, power under MNAR is slightly reduced relative to complete data and MAR, most noticeably at signal $= 0.3$.

In practice, PAIR-CI reliably detects moderate-to-strong conditional dependencies (signal $\geq 0.6$ and $n \geq 2{,}000$), with power exceeding 80\% across all DGPs. Sensitivity to weak effects at small $n$ remains limited---a regime in which even a calibrated test provides little actionable information. Although baseline methods exhibit higher rejection rates under the alternative, this comparison is uninformative: a test that rejects 45\% of true nulls will also reject under most true alternatives, and such rejections carry limited evidential value. For causal discovery, this tradeoff favors calibration: false positives (spurious edges) propagate errors through orientation, whereas false negatives (missing edges) primarily yield sparser graphs.

\begin{figure}[t]
    \centering
    \begin{subfigure}{\textwidth}
        \centering
        \includegraphics[width=\textwidth]{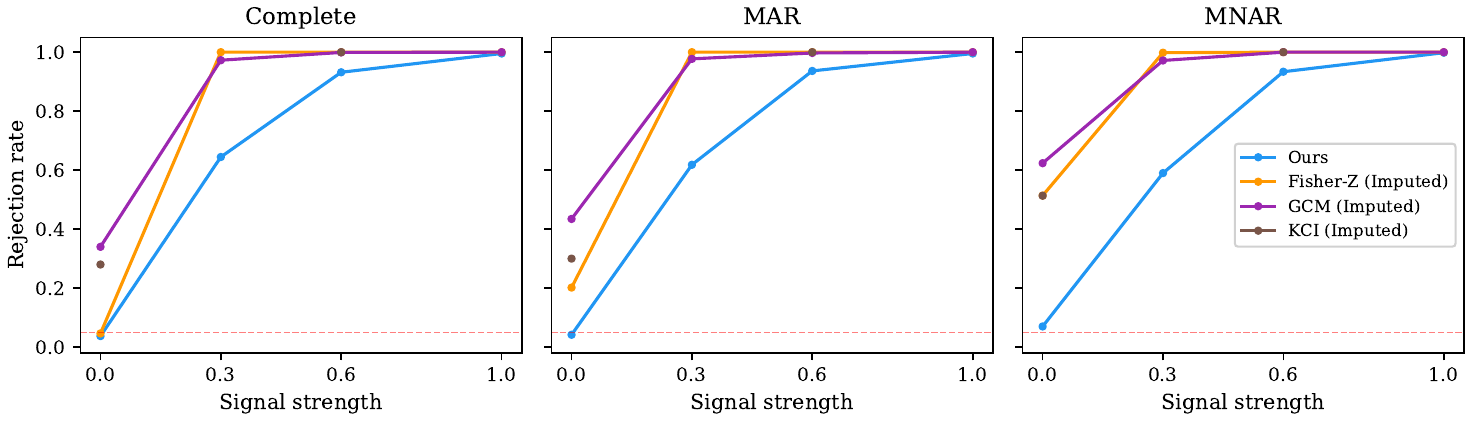}
        \caption{Linear Gaussian DGP}
        \label{fig:power_lg}
    \end{subfigure}
    
    \vspace{0.5em}
    
    \begin{subfigure}{\textwidth}
        \centering
        \includegraphics[width=\textwidth]{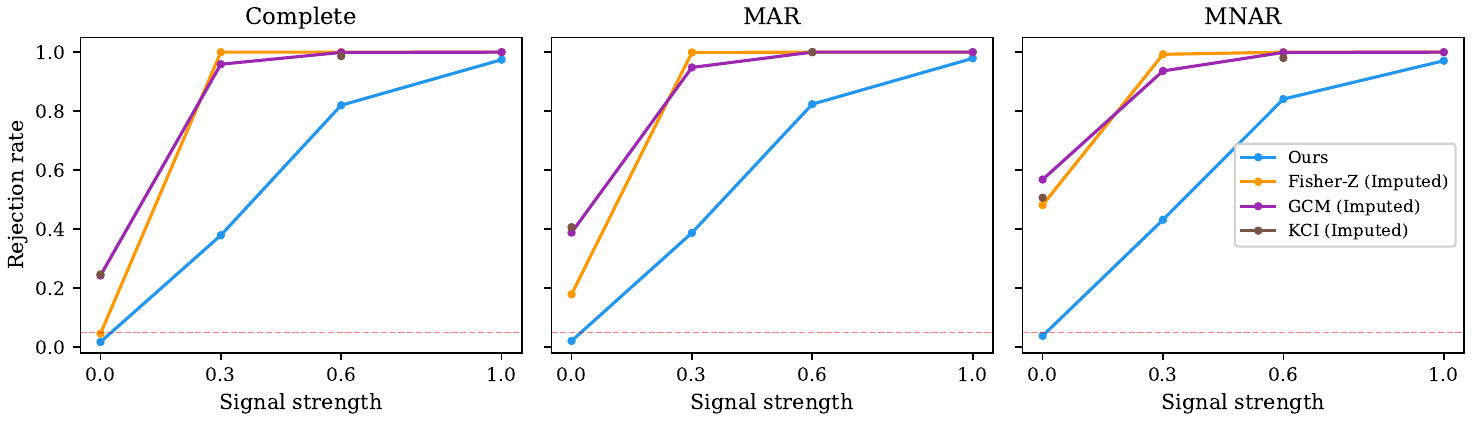}
        \caption{Post-nonlinear DGP}
        \label{fig:power_pnl}
    \end{subfigure}
    
    \vspace{0.5em}
    
    \begin{subfigure}{\textwidth}
        \centering
        \includegraphics[width=\textwidth]{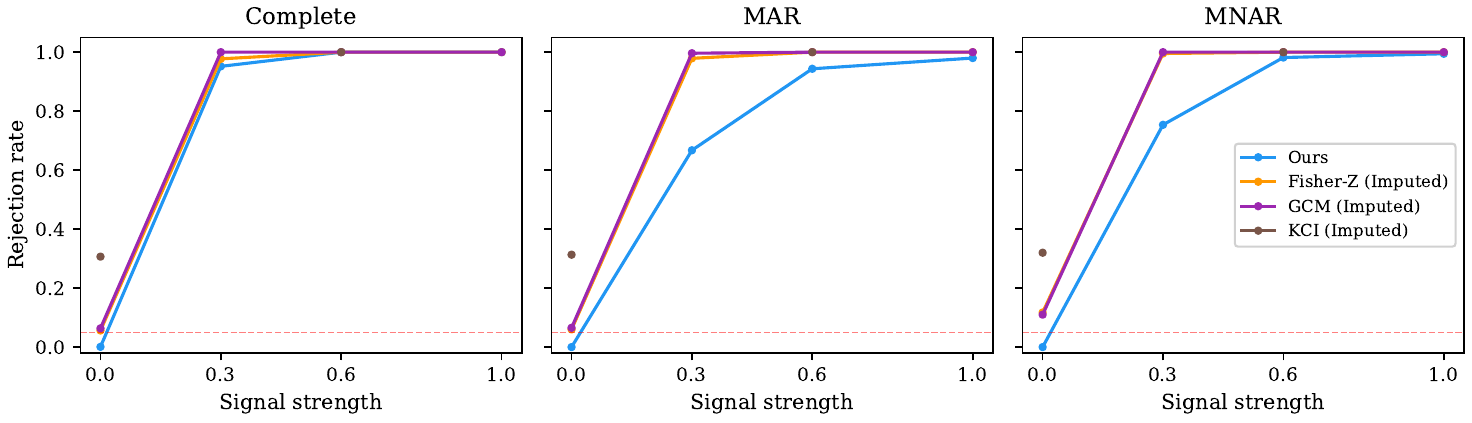}
        \caption{Latent confounder DGP}
        \label{fig:power_lc}
    \end{subfigure}

    \caption{\textbf{Power curves in standalone performance experiment.} Rejection rate under $H_1$ by signal strength and missingness mechanism for three DGPs (linear Gaussian, post-nonlinear, and latent confounder; described in Section~\ref{sec:exp_ci}). PAIR-CI (blue) has lower power than the miscalibrated baselines by design, so comparisons are uninformative (Table~\ref{tab:calibration}). Power exceeds 80\% at signal $\geq 0.6$ and $n \geq 2{,}000$ across all DGPs.}
    \label{fig:power_all}
\end{figure}

\section{Implementation Details}
\label{app:implementation}

\paragraph{PC algorithm with cached imputations.}
Algorithm~\ref{alg:pc} details the full procedure, with the only modification relative to standard PC being the upfront imputation step (Line~1), whose outputs are reused across all CI tests.

\begin{algorithm}[t]
    \caption{PC with Integrated CI Oracle}
    \label{alg:pc}
    \begin{algorithmic}[1]
        \REQUIRE Data $D$ (may contain missing values), significance level $\alpha$
        \STATE Impute $D$ to obtain $\hat{D}^{(1)}, \ldots, \hat{D}^{(M)}$ 
        (cache for reuse)
        \STATE Initialize complete undirected graph $G$ on $p$ nodes
        \FOR{$d = 0, 1, 2, \ldots$}
            \FOR{each edge $(i, j)$ in $G$ with 
            $|\textup{adj}(i) \setminus \{j\}| \geq d$}
                \FOR{each $\mathbf{S} \subseteq \textup{adj}(i) \setminus \{j\}$ 
                with $|\mathbf{S}| = d$}
                    \STATE Compute $t_{\mathrm{CI}}$ for $X_i \indep X_j \mid 
                    \mathbf{S}$ using cached $\hat{D}^{(m)}$
                    \IF{$p\textup{-value} > \alpha$}
                        \STATE Remove edge $(i,j)$; record $\mathbf{S}$ as 
                        separating set; \textbf{break}
                    \ENDIF
                \ENDFOR
            \ENDFOR
            \IF{no edges removed at depth $d$}
                \STATE \textbf{break}
            \ENDIF
        \ENDFOR
        \STATE Orient edges via v-structures and Meek's rules
        \STATE \textbf{return} CPDAG estimate $\hat{G}$
    \end{algorithmic}
\end{algorithm}

\paragraph{Software.}
We implement the algorithm in Python, using the scikit-learn library \citep{pedregosa2011scikit} for random forests and MICE (\texttt{IterativeImputer} with Bayesian ridge regression). The code is compatible with the \texttt{causal-learn} library for constraint-based causal discovery. Since the default \texttt{IterativeImputer} settings in scikit-learn do not perform proper posterior sampling, as required by Assumption~(A3), we set \texttt{sample\_posterior=True}. Conditional permutation uses $k$-nearest-neighbor binning with bandwidth $k = \max(2, \lfloor n^{2/(d+2)} \rfloor)$ \citep{berrett2020conditional}. Default settings are reported in Table~\ref{tab:defaults}.

\begin{table}[h]
    \centering
    \caption{\textbf{Default hyperparameters for PAIR-CI and PC algorithms}.}
    \label{tab:defaults}
    \begin{tabular}{lll}
        \toprule
        Parameter & Default & Description \\
        \midrule
        $M$ & 5 & Multiply imputed datasets \\
        $K$ & 5 \mbox{(fast)} / 10 \mbox{(general)} & CV folds per imputed dataset \\
        $n_{\textup{trees}}$ & 100 & Trees in random forest \\
        min\_samples\_leaf & 5 & Leaf regularization \\
        max\_subsample & 2{,}000 & Observation cap per CI test \\
        $\alpha$ & 0.05 & Significance level for PC \\
        \bottomrule
    \end{tabular}
\end{table}

\paragraph{Variants and feature bagging.}
We disable feature bagging (\texttt{max\_features} $=$ None) when $p < 12$ to ensure that the candidate variable is included at every split. For $12 \leq p \leq 80$, we set \texttt{max\_features} $= 12$; for $p > 80$, \texttt{max\_features} $= \sqrt{p}$. In the scaling experiments ($p \geq 20$), we employ the fast variant of PAIR-CI: ExtraTrees classifiers \citep{geurts2006extremely} with $M = 5$, $K = 5$, 100 trees, and early stopping (skipping remaining imputations if the $t$-statistic exceeds 4.0 after two imputations; see Appendix~\ref{app:early_stopping} for calibration validation).

\paragraph{Loss function selection.}
Variables with $\leq 20$ unique values are treated as discrete (binary cross-entropy loss; classification), others as continuous (mean squared error loss; regression).

\paragraph{Nonlinear edge mechanisms.}
In the nonlinear graph recovery experiments (Section~\ref{sec:exp_graph}), each edge is assigned one of four nonlinear functions, drawn uniformly at random:
\begin{align}                                
      f_1(x, w) &= w x^2 && \text{(quadratic)} \\
      f_2(x, w) &= w \sin(2x) && \text{(sinusoidal)} \\
      f_3(x, w) &= w |x| && \text{(absolute value)} \\
      f_4(x, w) &= w \tanh(1.5 x) && \text{(saturating)}
\end{align}                                                                                    
\subsection{Sensitivity to Hyperparameters}
\label{app:sensitivity}

We assess the sensitivity of graph recovery performance to four key hyperparameters, varying one at a time from the main-text default configuration ($M = 5$, $K = 5$, 100 trees, $\texttt{min\_samples\_leaf} = 5$). The experiment involves 10 random graphs with nonlinear edge mechanisms, $p = 10$, $n = 1{,}000$, and MAR missingness (affecting approximately 11\% of cells), with 20 datasets per graph. Test-wise deletion and FZ-vote are included as baselines.

At the default configuration, PAIR-CI achieves median SHD~10 (IQR 8--11), against 11 (9--13) for both test-wise deletion and FZ-vote. Each of the four analyses below varies one hyperparameter while holding the others fixed.

\paragraph{Number of imputations ($M$).} Median SHD is invariant across $M \in \{3, 5, 10\}$, with all settings yielding SHD~10 (IQR 8--11) and median F1 rising modestly from 0.700 at $M = 3$ to 0.720 at $M = 10$. Runtime scales near-linearly with $M$ (91s, 139s, 256s). We default to $M = 5$: although between-imputation variance shrinks at a rate of $1/M$, the dominant contribution to the test statistic's variance is within-imputation prediction noise, which is not affected by $M$. The marginal F1 gain at $M = 10$ does not justify the doubled runtime.

\paragraph{Number of CV folds ($K$).} Median SHD and F1 are invariant across $K \in \{3, 5, 10\}$---SHD~10 (IQR 8--12) and F1 $= 0.706$ in all settings---while runtime scales linearly (82s, 139s, 276s for $K = 3, 5, 10$). This stability stems from the \citeauthor{bayle2020cross} within-imputation variance estimator, which absorbs the fold-correlation correction that would otherwise penalize smaller $K$. We default to $K = 5$ for the fast variant, balancing learner quality and runtime, and to $K = 10$ for the general variant, where the absence of a fold-correlation penalty justifies the higher $K$ (Appendix~\ref{app:bridge}).

\paragraph{Number of trees.} Median SHD is invariant across $n_{\textup{trees}} \in \{20, 50, 100, 200\}$ at SHD~10, with F1 stable at 0.700--0.706 and runtime growing roughly linearly (71s, 102s, 139s, 241s). Performance appears to plateau at $n_{\textup{trees}} = 50$. We default to 100 as a conservative margin, though smaller forests likely suffice at $p = 10$.

\paragraph{Minimum samples per leaf.} Median SHD and F1 are stable across $\texttt{min\_samples\_leaf} \in \{5, 10, 20\}$ (SHD~10, F1 $\approx 0.706$, runtime 137--140s). At $\texttt{min\_samples\_leaf} = 1$, performance degrades: SHD rises to 11 (IQR 9--13) and F1 drops to 0.571, as fully grown trees produce high-variance per-fold predictions whose noise inflates the loss-difference statistic beyond what the \citeauthor{bayle2020cross} estimator corrects. The lower bound $\texttt{min\_samples\_leaf} = 5$ is thus consequential, with larger values performing similarly and smaller values inflating SHD and reducing F1.

\paragraph{Summary.} Median SHD at $p = 10$ is largely insensitive to $M$, $K$, and $n_{\textup{trees}}$: all reasonable configurations deliver SHD~10 (IQR 8--12), consistently below the baselines at SHD~11. The only consequential hyperparameter is $\texttt{min\_samples\_leaf}$, with values below 5 degrading both SHD (to 11) and F1 (0.571 vs.\ 0.706). Increasing $M$ or $K$ beyond their defaults nearly doubles runtime for at most a 1.4 percentage point gain in F1. We therefore default to $M = K = 5$, reserving $K = 10$ for the general variant, where \citeauthor{bayle2020cross}'s estimator eliminates the fold-correlation penalty.

\section{Robustness to Missingness Rate and Sample Size}
\label{app:robustness}
We evaluate graph recovery at $p = 10$ across all combinations of 
missingness rate $\in \{10\%, 30\%, 50\%\}$, sample size $n \in \{500, 1{,}000, 2{,}000\}$, and mechanism (MAR, MNAR), yielding 18 conditions per edge type (10 graphs, 20 datasets per graph). Table~\ref{tab:robustness_mar} reports the full MAR sweep with nonlinear edges; the corresponding MNAR results are summarized in Section~\ref{sec:exp_graph} of the main text. Table~\ref{tab:shd_combined} reports median SHD across all missingness mechanisms for both linear and nonlinear edges at the default 30\% missingness rate.

\begin{table}[h]
    \centering
    \caption{\textbf{Robustness to missingness rate and sample size.} Median SHD at $p = 10$ with nonlinear edges and MAR missingness.}
    \label{tab:robustness_mar}
    \begin{tabular}{lccccccccc}
    \toprule
                  & \multicolumn{3}{c}{$n{=}500$} & \multicolumn{3}{c}{$n{=}1000$} & \multicolumn{3}{c}{$n{=}2000$} \\
    \cmidrule(lr){2-4} \cmidrule(lr){5-7} \cmidrule(lr){8-10}
    Method        & 10\% & 30\% & 50\%            & 10\%  & 30\%  & 50\%           & 10\%  & 30\%  & 50\%  \\
    \midrule
    PAIR-CI       &  9.0 & 10.0 & 10.0 &  9.0 &  9.0 &  9.0 &  9.0 &  9.0 &  9.0 \\
    Complete case &  9.0 &  9.0 &  9.5 &  9.0 &  9.0 &  9.0 & 10.0 &  9.0 &  9.0 \\
    Test-wise     & 10.0 & 10.0 & 10.0 & 10.0 & 10.0 & 10.0 & 11.0 & 11.0 & 10.0 \\
    FZ-vote       & 10.0 & 10.0 & 10.0 & 10.0 & 10.0 & 10.0 & 11.0 & 11.0 & 11.0 \\
    \bottomrule
\end{tabular}

\end{table}

\paragraph{F1 scores.}
The precision--recall profile is stable across all 18 conditions. PAIR-CI attains median precision of 1.000 under both MAR and MNAR, median recall of 0.500, and F1 of 0.667 (MAR) and 0.628 (MNAR). Test-wise deletion and FZ-vote obtain higher recall (0.60--0.67) but lower precision ($\approx$0.75), resulting in F1 of 0.67--0.71. Complete-case analysis records the highest F1 under MAR (0.778) but degrades under MNAR (F1 $= 0.667$). The distinctive feature of PAIR-CI at this scale is perfect precision: conservatism manifests as missing edges rather than spurious ones. F1 advantages become more pronounced at $p \geq 20$ as the cost of miscalibration compounds (Section~\ref{sec:exp_graph}).

\begin{table}[h]
    \centering
    \caption{\textbf{Median SHD {\scriptsize(skeleton SHD)} under linear Gaussian (Panel~A) and nonlinear (Panel~B) edges} ($p = 10$, $n = 1{,}000$). Skeleton SHD counts missing and extra edges only, while total SHD additionally penalizes orientation errors. $\dagger$As MissDAG outputs a fully oriented DAG rather than a CPDAG, its total SHD (shown in gray) is not directly comparable to that of constraint-based methods; skeleton SHD provides the appropriate like-for-like metric. The ``Complete'' column corresponds to the rate-0.1 complete-mechanism condition (i.e., 10\% of observations have no missing values imposed), not a fully observed dataset.}
    \label{tab:shd_combined}
    \vspace{0.5em}
\textbf{Panel A: Linear Gaussian edges}\\[0.4em]
\begin{tabular}{lcccc}
    \toprule
    Method & Complete & MAR & MNAR & Mixed \\
    \midrule
    PAIR-CI & 5\,{\scriptsize(1)} & 6\,{\scriptsize(1)} & 6\,{\scriptsize(1)} & 6\,{\scriptsize(1)} \\
    Complete case & 6\,{\scriptsize(2)} & 6\,{\scriptsize(2)} & 7\,{\scriptsize(2)} & 7\,{\scriptsize(2)} \\
    Test-wise & 6\,{\scriptsize(2)} & 6\,{\scriptsize(2)} & 6\,{\scriptsize(2)} & 6\,{\scriptsize(2)} \\
    FZ-vote & 6\,{\scriptsize(2)} & 7\,{\scriptsize(2)} & 7\,{\scriptsize(3)} & 7\,{\scriptsize(3)} \\
    FZ-Rubin & 6\,{\scriptsize(2)} & 6\,{\scriptsize(2)} & 7\,{\scriptsize(2)} & 7\,{\scriptsize(2)} \\
    MissDAG$^\dagger$ & \textcolor{lightgray}{3\,{\scriptsize(3)}} & \textcolor{lightgray}{5\,{\scriptsize(5)}} & \textcolor{lightgray}{6\,{\scriptsize(6)}} & \textcolor{lightgray}{6\,{\scriptsize(5)}} \\
    \bottomrule
\end{tabular}

\vspace{1em}
\textbf{Panel B: Nonlinear edges}\\[0.4em]
\begin{tabular}{lcccc}
    \toprule
    Method & Complete & MAR & MNAR & Mixed \\
    \midrule
    PAIR-CI & 11\,{\scriptsize(6)} & 11\,{\scriptsize(6)} & 11\,{\scriptsize(7)} & 11\,{\scriptsize(7)} \\
    Complete case & 13\,{\scriptsize(8)} & 12\,{\scriptsize(6)} & 12\,{\scriptsize(7)} & 12\,{\scriptsize(6)} \\
    Test-wise & 13\,{\scriptsize(8)} & 12\,{\scriptsize(7)} & 12\,{\scriptsize(7)} & 12\,{\scriptsize(7)} \\
    FZ-vote & 13\,{\scriptsize(8)} & 13\,{\scriptsize(8)} & 13\,{\scriptsize(8)} & 13\,{\scriptsize(8)} \\
    FZ-Rubin \citep{witte2022multiple} & 13\,{\scriptsize(8)} & 12\,{\scriptsize(8)} & 12\,{\scriptsize(8)} & 12\,{\scriptsize(8)} \\
    MissDAG$^\dagger$ \citep{gao2022missdag} & \textcolor{lightgray}{9\,{\scriptsize(8)}} & \textcolor{lightgray}{10\,{\scriptsize(9)}} & \textcolor{lightgray}{10\,{\scriptsize(10)}} & \textcolor{lightgray}{10\,{\scriptsize(10)}} \\
    \bottomrule
\end{tabular}
\end{table}

Inspecting skeleton SHD makes the asymmetry explicit: PAIR-CI delivers the lowest skeleton SHD in every cell (1 with linear edges, 6--7 with nonlinear edges), outperforming all constraint-based baselines (2 linear, 6--8 nonlinear) and MissDAG (3--6 linear, 8--10 nonlinear). MissDAG's lower total SHD reflects the absence of orientation penalties in its DAG output rather than superior skeleton recovery.

\section{Skeleton Recovery: Precision--Recall Profiles}
\label{app:pr_scatter}

Figure~\ref{fig:pr_scatter} visualizes the per-replicate precision--recall profile across all six methods, three graph sizes, and both DGPs. Each panel pools the four missingness conditions; individual replicates are shown in low opacity, with per-method medians overlaid as large markers. PAIR-CI consistently occupies the high-precision region in every panel (precision $= 1.000$ at $p = 10$),  exhibiting lower recall than parametric baselines. MissDAG appears in the shaded ``declare nothing'' region (top-left, precision $> 0.8$, recall $< 0.3$) in all nonlinear panels, moving closer to the other methods in linear panels, where its Gaussian assumptions are satisfied.

\begin{figure}[h]
    \centering
    \includegraphics[width=\linewidth]{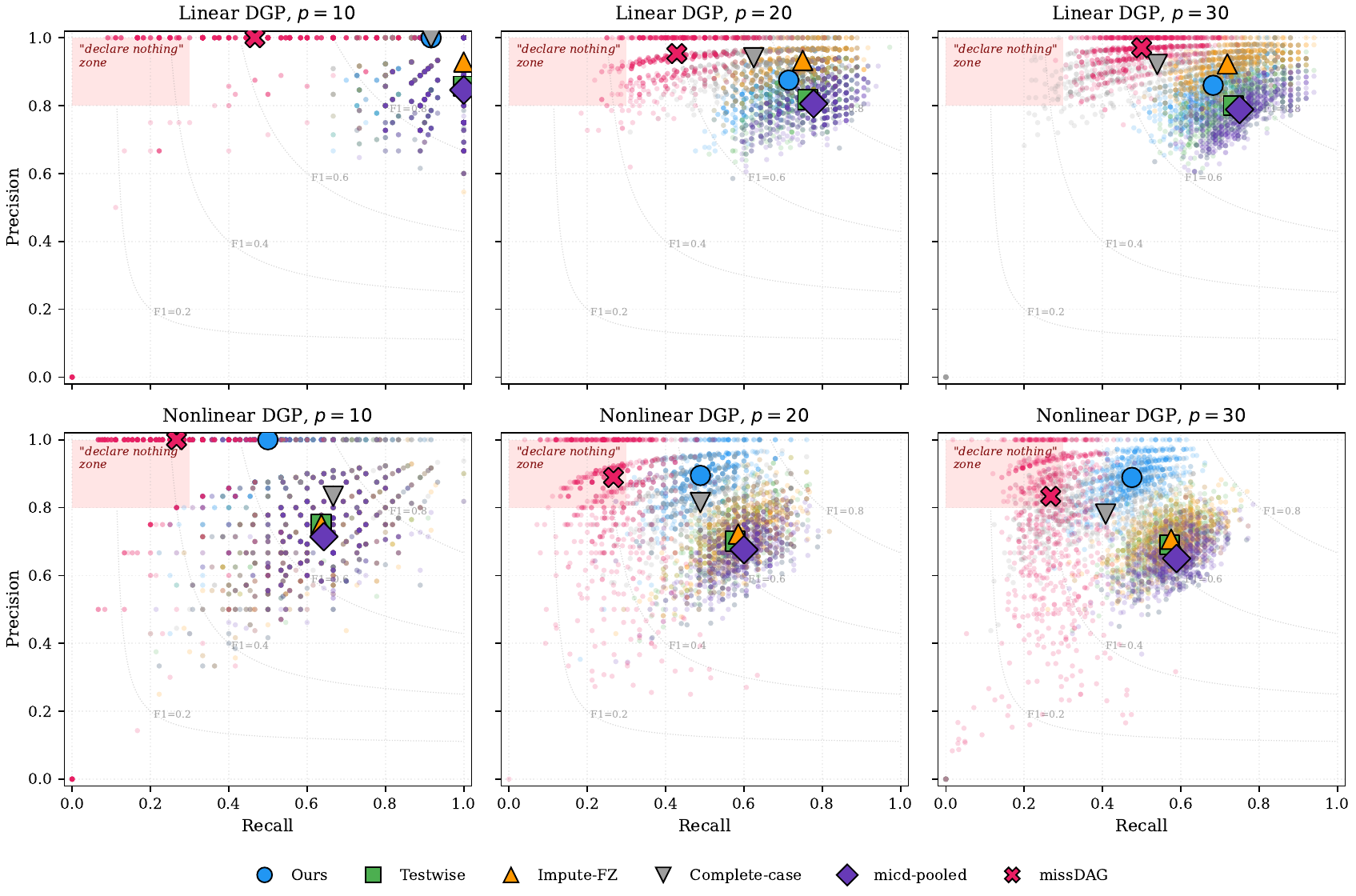}
    \caption{\textbf{Precision--recall profiles by scale and DGP.} Each panel pools all four missingness conditions. Per-replicate scatter is shown in low alpha, with per-method medians overlaid as large markers. F1 iso-curves are included for reference.}
    \label{fig:pr_scatter}
\end{figure}

\section{ALARM Network: Full Results}
\label{app:alarm}

Tables~\ref{tab:alarm_full} and~\ref{tab:alarm_full_nonlinear} report median SHD across missingness rates and mechanisms for linear Gaussian and nonlinear settings, respectively. Missingness is injected into 10 of the 25 non-root variables, while edge weights and noise follow the same specification as in the synthetic graph recovery experiments (Section~\ref{sec:exp_graph}).

\paragraph{Linear regime.}
In linear Gaussian conditions, test-wise deletion is the strongest baseline (SHD 28.5--34) yet still trails PAIR-CI (22.5--25.5) by 6--11.5 units. Although Fisher's $Z$ is near-optimal for Gaussian data, its advantage erodes under test-wise deletion as miscalibration compounds across the PC algorithm's many CI tests at $p = 37$. FZ-vote (30.5--38) and complete-case analysis (30--46) perform substantially worse, illustrating the costs of majority-vote MI pooling and aggressive row deletion, respectively. MissDAG registers the lowest total SHD (20.5--30), as expected when its functional-form assumptions hold, but yields lower F1 than PAIR-CI (0.60--0.77 vs.\ 0.89--0.92), reflecting its conservative skeleton and the DAG--CPDAG asymmetry discussed in Section~\ref{sec:exp_graph}.

\begin{table}[h]
    \centering
    \caption{\textbf{ALARM network results with linear Gaussian edge mechanisms.} Median SHD {\scriptsize(skeleton SHD)} across missingness rates and mechanisms (20 replicates, 37 nodes, 46 edges). Skeleton SHD counts missing and extra edges only, while total SHD additionally penalizes orientation errors. Methods: P-CI = PAIR-CI (fast); CC = complete case; TW = test-wise deletion; FZ-v = FZ-vote; FZ-R = FZ-Rubin. $\dagger$As MissDAG outputs a fully oriented DAG rather than a CPDAG, its total SHD (shown in gray) is not directly comparable to that of constraint-based methods; skeleton SHD provides the appropriate like-for-like metric.}
    \label{tab:alarm_full}
    \begin{tabular}{llcccccc}
    \toprule
    Rate & Mechanism & P-CI & CC & TW & FZ-v & FZ-R & MissDAG$^\dagger$ \\
    \midrule
    0\% & Complete & 25.0\,{\scriptsize(8)} & 33.0\,{\scriptsize(14)} & 33.0\,{\scriptsize(14)} & 33.0\,{\scriptsize(14)} & 33.0\,{\scriptsize(15)} & \textcolor{lightgray}{20.5\,{\scriptsize(18)}} \\
    \midrule
    10\% & MAR   & 24.0\,{\scriptsize(9)} & 32.0\,{\scriptsize(14)} & 31.0\,{\scriptsize(14)} & 31.5\,{\scriptsize(16)} & 32.5\,{\scriptsize(17)} & \textcolor{lightgray}{22.5\,{\scriptsize(20)}} \\
         & MNAR  & 23.0\,{\scriptsize(9)} & 34.5\,{\scriptsize(15)} & 31.0\,{\scriptsize(15)} & 32.5\,{\scriptsize(15)} & 31.0\,{\scriptsize(15)} & \textcolor{lightgray}{24.0\,{\scriptsize(21)}} \\
         & Mixed & 22.5\,{\scriptsize(8)} & 30.0\,{\scriptsize(13)} & 28.5\,{\scriptsize(13)} & 32.0\,{\scriptsize(14)} & 30.5\,{\scriptsize(14)} & \textcolor{lightgray}{24.5\,{\scriptsize(22)}} \\
    \midrule
    20\% & MAR   & 24.0\,{\scriptsize(9)} & 34.0\,{\scriptsize(14)} & 33.0\,{\scriptsize(14)} & 30.5\,{\scriptsize(15)} & 30.5\,{\scriptsize(15)} & \textcolor{lightgray}{21.5\,{\scriptsize(19)}} \\
         & MNAR  & 22.5\,{\scriptsize(9)} & 40.0\,{\scriptsize(20)} & 34.0\,{\scriptsize(15)} & 36.0\,{\scriptsize(17)} & 35.0\,{\scriptsize(17)} & \textcolor{lightgray}{27.0\,{\scriptsize(25)}} \\
         & Mixed & 24.0\,{\scriptsize(8)} & 38.5\,{\scriptsize(17)} & 33.0\,{\scriptsize(13)} & 35.5\,{\scriptsize(15)} & 34.0\,{\scriptsize(14)} & \textcolor{lightgray}{27.5\,{\scriptsize(25)}} \\
    \midrule
    40\% & MAR   & 25.5\,{\scriptsize(10)} & 35.5\,{\scriptsize(16)} & 32.5\,{\scriptsize(15)} & 38.0\,{\scriptsize(19)} & 35.0\,{\scriptsize(17)} & \textcolor{lightgray}{27.0\,{\scriptsize(26)}} \\
         & MNAR  & 25.5\,{\scriptsize(9)} & 46.0\,{\scriptsize(46)} & 33.0\,{\scriptsize(14)} & 36.5\,{\scriptsize(18)} & 35.0\,{\scriptsize(17)} & \textcolor{lightgray}{30.0\,{\scriptsize(28)}} \\
         & Mixed & 24.0\,{\scriptsize(10)} & 46.0\,{\scriptsize(46)} & 32.5\,{\scriptsize(14)} & 36.0\,{\scriptsize(18)} & 34.5\,{\scriptsize(17)} & \textcolor{lightgray}{29.0\,{\scriptsize(27)}} \\
    \bottomrule
\end{tabular}

\end{table}

\paragraph{Nonlinear regime.}
In nonlinear settings, PAIR-CI's advantage widens substantially: test-wise deletion deteriorates to SHD 48--56.5, while PAIR-CI remains at 30--35.5, opening a gap of 13.5--25.5 SHD units (compared to 6--11 under linear edges). FZ-vote (SHD 53.5--62) and complete-case analysis (SHD 46--56.5) are characterized by the same pattern at a larger scale. MissDAG comes closest to PAIR-CI on total SHD (33.5--39.5) yet yields substantially lower F1 (0.32--0.51 vs.\ 0.73--0.82 for PAIR-CI). In one replicate of the rate-0.1 mixed condition, complete-case Fisher-$Z$ fails outright due to a singular correlation submatrix, demonstrating a brittleness that random-forest-based testing avoids by construction.

\begin{table}[h]
    \centering
    \caption{\textbf{ALARM network results with nonlinear edge mechanisms.} Median SHD {\scriptsize(skeleton SHD)} across missingness rates and mechanisms (20 replicates, 37 nodes, 46 edges). Skeleton SHD counts missing and extra edges only, while total SHD additionally penalizes orientation errors. Methods: P-CI = PAIR-CI (fast); CC = complete case; TW = test-wise deletion; FZ-v = FZ-vote; FZ-R = FZ-Rubin. $\dagger$As MissDAG outputs a fully oriented DAG rather than a CPDAG, its total SHD (shown in gray) is not directly comparable to that of constraint-based methods; skeleton SHD provides the appropriate like-for-like metric.}
    \label{tab:alarm_full_nonlinear}
    \begin{tabular}{llcccccc}
    \toprule
    Rate & Mechanism & P-CI & CC & TW & FZ-v & FZ-R & MissDAG$^\dagger$ \\
    \midrule
    0\% & Complete & 31.0\,{\scriptsize(14)} & 56.5\,{\scriptsize(41)} & 56.5\,{\scriptsize(41)} & 56.5\,{\scriptsize(41)} & 56.5\,{\scriptsize(41)} & \textcolor{lightgray}{33.5\,{\scriptsize(31)}} \\
    \midrule
    10\% & MAR   & 31.0\,{\scriptsize(16)} & 50.0\,{\scriptsize(36)} & 51.5\,{\scriptsize(37)} & 54.0\,{\scriptsize(40)} & 54.0\,{\scriptsize(41)} & \textcolor{lightgray}{35.0\,{\scriptsize(34)}} \\
         & MNAR  & 32.5\,{\scriptsize(17)} & 51.5\,{\scriptsize(35)} & 52.0\,{\scriptsize(39)} & 58.0\,{\scriptsize(44)} & 56.0\,{\scriptsize(43)} & \textcolor{lightgray}{37.5\,{\scriptsize(36)}} \\
         & Mixed & 30.0\,{\scriptsize(15)} & 51.0\,{\scriptsize(35)} & 50.0\,{\scriptsize(36)} & 53.5\,{\scriptsize(39)} & 54.0\,{\scriptsize(40)} & \textcolor{lightgray}{35.5\,{\scriptsize(35)}} \\
    \midrule
    20\% & MAR   & 31.5\,{\scriptsize(16)} & 51.0\,{\scriptsize(34)} & 52.0\,{\scriptsize(37)} & 57.5\,{\scriptsize(43)} & 55.5\,{\scriptsize(40)} & \textcolor{lightgray}{34.0\,{\scriptsize(33)}} \\
         & MNAR  & 33.0\,{\scriptsize(16)} & 51.0\,{\scriptsize(40)} & 48.0\,{\scriptsize(36)} & 55.5\,{\scriptsize(41)} & 53.0\,{\scriptsize(41)} & \textcolor{lightgray}{37.0\,{\scriptsize(36)}} \\
         & Mixed & 33.5\,{\scriptsize(17)} & 48.0\,{\scriptsize(36)} & 52.0\,{\scriptsize(35)} & 56.0\,{\scriptsize(41)} & 54.0\,{\scriptsize(39)} & \textcolor{lightgray}{35.0\,{\scriptsize(35)}} \\
    \midrule
    40\% & MAR   & 35.5\,{\scriptsize(20)} & 51.5\,{\scriptsize(39)} & 51.0\,{\scriptsize(34)} & 57.5\,{\scriptsize(45)} & 56.0\,{\scriptsize(39)} & \textcolor{lightgray}{39.5\,{\scriptsize(37)}} \\
         & MNAR  & 35.5\,{\scriptsize(21)} & 46.0\,{\scriptsize(46)} & 52.0\,{\scriptsize(37)} & 62.0\,{\scriptsize(47)} & 58.5\,{\scriptsize(43)} & \textcolor{lightgray}{39.5\,{\scriptsize(39)}} \\
         & Mixed & 35.0\,{\scriptsize(21)} & 46.0\,{\scriptsize(46)} & 48.5\,{\scriptsize(38)} & 60.0\,{\scriptsize(47)} & 55.0\,{\scriptsize(41)} & \textcolor{lightgray}{39.0\,{\scriptsize(38)}} \\
    \bottomrule
\end{tabular}

\end{table}

\section{HAILFINDER Network: Full Results}
\label{app:hailfinder}

Tables~\ref{tab:hailfinder_full} and~\ref{tab:hailfinder_f1} show median SHD and F1 across all missingness rates and mechanisms (20 replicates per cell, nonlinear edges, missingness induced in 15 non-root variables).

\paragraph{Gap acceleration.}
The SHD gap over the best baseline grows superlinearly with graph size: at $p = 56$, PAIR-CI reaches SHD 62.5--65.5, while test-wise deletion records 103.5--112.5 and FZ-vote exceeds 120 in most cells. No baseline lies within 30 SHD of PAIR-CI with nonlinear edges. MissDAG was not evaluated on HAILFINDER due to computational cost.

\paragraph{``Winning by giving up'' pattern.}
Complete-case analysis provides the clearest instance of this phenomenon. At 20\% MNAR missingess, it yields SHD 66---close to PAIR-CI's 64---but with F1 of exactly 0, as the surviving observations support no edge recovery. The same degenerate outcome occurs at 40\% mixed and 40\% MNAR missingness. F1 thus reveals that SHD alone cannot distinguish between recovering the graph and declaring nothing (Appendix~\ref{app:pr_scatter}).

\paragraph{Fisher--$Z$ brittleness.}
FZ-vote fails on one replicate at 40\% MNAR missingness due to a singular correlation matrix, providing a further illustration of the brittleness noted in Appendix~\ref{app:alarm}.

\begin{table}[h]
    \centering
    \caption{\textbf{HAILFINDER network results (SHD) with nonlinear edge mechanisms.} Median SHD across missingness rates and mechanisms (20 replicates, 56 nodes, 66 edges). Methods: P-CI = PAIR-CI (fast); CC = complete case; TW = test-wise deletion; FZ-v = FZ-vote; FZ-R = FZ-Rubin. $\dagger$Complete-case cells shown in gray have median F1 = 0.000 (empty recovered skeleton; Table~\ref{tab:hailfinder_f1}); their SHD reflects degenerate output rather than competitive recovery.}
    \label{tab:hailfinder_full}
    \begin{tabular}{llccccc}
    \toprule
    Rate & Mechanism & P-CI & CC & TW & FZ-v & FZ-R \\
    \midrule
    0\% & Complete & 63.0 & 125.0 & 125.0 & 125.0 & 125.0 \\
    \midrule
    10\% & MAR   & 64.0 & 99.5 & 114.0 & 123.0 & 118.0 \\
         & MNAR  & 63.0 & 89.5 & 107.0 & 121.5 & 119.0 \\
         & Mixed & 64.0 & 89.5 & 108.5 & 123.0 & 122.5 \\
    \midrule
    20\% & MAR   & 62.5 & 98.0 & 112.5 & 128.0 & 121.0 \\
         & MNAR  & 64.0 & \textcolor{lightgray}{66.0}$^\dagger$ & 103.5 & 122.0 & 120.0 \\
         & Mixed & 65.5 & 89.5 & 104.5 & 124.5 & 120.0 \\
    \midrule
    40\% & MAR   & 66.0 & 94.0 & 103.5 & 125.5 & 118.5 \\
         & MNAR  & 66.0 & \textcolor{lightgray}{66.0}$^\dagger$ & 97.5 & 130.5 & 116.5 \\
         & Mixed & 65.5 & \textcolor{lightgray}{66.0}$^\dagger$ & 102.0 & 124.0 & 120.0 \\
    \bottomrule
\end{tabular}

\end{table}

\begin{table}[h]
    \centering
    \caption{\textbf{HAILFINDER network results (F1) with nonlinear edge mechanisms.} Median F1 across missingness rates and mechanisms (20 replicates, 56 nodes, 66 edges). Methods: P-CI = PAIR-CI (fast); CC = complete case; TW = test-wise deletion; FZ-v = FZ-vote; FZ-R = FZ-Rubin. $\dagger$Complete-case cells shown in gray have median F1 = 0.000, indicating an empty recovered skeleton.}
    \label{tab:hailfinder_f1}
    \begin{tabular}{llccccc}
    \toprule
    Rate & Mechanism & P-CI & CC & TW & FZ-v & FZ-R \\
    \midrule
    0\% & Complete & 0.586 & 0.467 & 0.467 & 0.467 & 0.467 \\
    \midrule
    10\% & MAR   & 0.602 & 0.498 & 0.479 & 0.457 & 0.468 \\
         & MNAR  & 0.585 & 0.425 & 0.494 & 0.479 & 0.491 \\
         & Mixed & 0.584 & 0.489 & 0.503 & 0.478 & 0.460 \\
    \midrule
    20\% & MAR   & 0.590 & 0.462 & 0.484 & 0.450 & 0.460 \\
         & MNAR  & 0.594 & \textcolor{lightgray}{0.000}$^\dagger$ & 0.490 & 0.468 & 0.481 \\
         & Mixed & 0.578 & 0.296 & 0.492 & 0.481 & 0.471 \\
    \midrule
    40\% & MAR   & 0.573 & 0.402 & 0.489 & 0.474 & 0.451 \\
         & MNAR  & 0.560 & \textcolor{lightgray}{0.000}$^\dagger$ & 0.505 & 0.445 & 0.453 \\
         & Mixed & 0.571 & \textcolor{lightgray}{0.000}$^\dagger$ & 0.485 & 0.458 & 0.468 \\
    \bottomrule
\end{tabular}

\end{table}

\section{Sachs Network: Full Results}
\label{app:sachs}

Although $p=11$ is close to the $p=10$ threshold in Section~\ref{sec:exp_graph}, we use the fast variant of PAIR-CI for consistency with larger real-world benchmarks (ALARM, HAILFINDER) and because it produces near-identical results to the general variant at this scale.

\paragraph{Linear (real data).}
Table~\ref{tab:sachs_linear} reports results across all missingness rates (10\%, 20\%, 40\%) and mechanisms. PAIR-CI achieves median SHD of 19.5--23.5, compared to 17.0--19.0 for constraint-based baselines and 17.5--19.0 for FZ-Rubin. MissDAG's SHD is lowest (15.5--20.0), consistent with the approximate optimality of its Gaussian linear assumptions at this scale, though F1 advantages are more mixed: 0.45--0.54 for PAIR-CI, 0.49--0.63 for constraint-based baselines, 0.53--0.64 for FZ-Rubin, and 0.63--0.71 for MissDAG. As discussed in Section~\ref{sec:discussion}, Sachs represents a worst-case regime for PAIR-CI: a small graph with approximately linear relationships and hyperparameters tuned for $p \geq 20$.

\begin{table}[h]
    \centering
    \caption{\textbf{Sachs network results with linear Gaussian observational data.} Total SHD {\scriptsize(skeleton SHD)} across missingness rates and mechanisms (20 replicates, 11 nodes, 17 true edges). Methods: P-CI = PAIR-CI (fast); CC = complete case; TW = test-wise deletion; FZ-v = FZ-vote; FZ-R = FZ-Rubin. $\dagger$As MissDAG outputs a fully oriented DAG rather than a CPDAG, its total SHD (shown in gray) is not directly comparable to that of constraint-based methods; skeleton SHD provides the appropriate like-for-like metric.}
    \label{tab:sachs_linear}
    \begin{tabular}{llcccccc}
    \toprule
    Rate & Mechanism & P-CI & CC & TW & FZ-v & FZ-R & MissDAG$^\dagger$ \\
    \midrule
    0\% & Complete & 23.5\,{\scriptsize(17)} & 18.0\,{\scriptsize(11)} & 18.0\,{\scriptsize(11)} & 18.0\,{\scriptsize(11)} & 18.0\,{\scriptsize(11)} & \textcolor{lightgray}{15.5\,{\scriptsize(10)}}$^\dagger$ \\
    \midrule
    10\% & MAR   & 21.0\,{\scriptsize(16)} & 18.5\,{\scriptsize(11)} & 18.0\,{\scriptsize(10)} & 18.0\,{\scriptsize(11)} & 19.0\,{\scriptsize(11)} & \textcolor{lightgray}{15.5\,{\scriptsize(10)}}$^\dagger$ \\
         & MNAR  & 23.5\,{\scriptsize(18)} & 17.0\,{\scriptsize(12)} & 17.5\,{\scriptsize(11)} & 18.0\,{\scriptsize(12)} & 18.0\,{\scriptsize(12)} & \textcolor{lightgray}{16.5\,{\scriptsize(11)}}$^\dagger$ \\
         & Mixed & 23.5\,{\scriptsize(18)} & 19.0\,{\scriptsize(12)} & 18.0\,{\scriptsize(11)} & 18.0\,{\scriptsize(10)} & 17.5\,{\scriptsize(10)} & \textcolor{lightgray}{18.0\,{\scriptsize(12)}}$^\dagger$ \\
    \midrule
    20\% & MAR   & 20.0\,{\scriptsize(15)} & 18.0\,{\scriptsize(12)} & 18.0\,{\scriptsize(11)} & 19.0\,{\scriptsize(12)} & 18.0\,{\scriptsize(11)} & \textcolor{lightgray}{17.5\,{\scriptsize(11)}}$^\dagger$ \\
         & MNAR  & 22.5\,{\scriptsize(18)} & 18.0\,{\scriptsize(13)} & 17.5\,{\scriptsize(13)} & 18.5\,{\scriptsize(13)} & 18.0\,{\scriptsize(13)} & \textcolor{lightgray}{17.5\,{\scriptsize(12)}}$^\dagger$ \\
         & Mixed & 22.0\,{\scriptsize(16)} & 18.0\,{\scriptsize(13)} & 17.0\,{\scriptsize(13)} & 18.0\,{\scriptsize(12)} & 17.5\,{\scriptsize(12)} & \textcolor{lightgray}{19.5\,{\scriptsize(14)}}$^\dagger$ \\
    \midrule
    40\% & MAR   & 19.5\,{\scriptsize(15)} & 19.0\,{\scriptsize(11)} & 18.0\,{\scriptsize(11)} & 19.0\,{\scriptsize(12)} & 18.5\,{\scriptsize(11)} & \textcolor{lightgray}{20.0\,{\scriptsize(13)}}$^\dagger$ \\
         & MNAR  & 22.0\,{\scriptsize(15)} & 17.0\,{\scriptsize(13)} & 18.0\,{\scriptsize(13)} & 19.0\,{\scriptsize(12)} & 18.0\,{\scriptsize(12)} & \textcolor{lightgray}{18.0\,{\scriptsize(12)}}$^\dagger$ \\
         & Mixed & 20.0\,{\scriptsize(14)} & 18.0\,{\scriptsize(12)} & 18.0\,{\scriptsize(11)} & 19.0\,{\scriptsize(12)} & 18.0\,{\scriptsize(11)} & \textcolor{lightgray}{19.0\,{\scriptsize(13)}}$^\dagger$ \\
    \bottomrule
\end{tabular}

\end{table}

\paragraph{Nonlinear (synthetic).}
To isolate the effect of nonlinearity from those of scale and topology, we generate synthetic data on the same 11-node Sachs DAG using nonlinear edge mechanisms (Section~\ref{sec:exp_graph}). Table~\ref{tab:sachs_nonlinear} shows median SHD across all conditions. With nonlinear edges, PAIR-CI yields SHD 14--15 across all conditions, relative to 16--18 for test-wise deletion and FZ-vote, 15--17.5 for complete-case analysis, and 16--19 for FZ-Rubin. MissDAG returns the lowest SHD (13--14.5), albeit with substantially lower F1 (0.30--0.44 vs.\ 0.74--0.85 for PAIR-CI). This is because MissDAG recovers a sparse skeleton with high precision but low recall, reflecting a genuine precision--recall tradeoff rather than the ``winning by giving up'' pattern observed at larger $p$. PAIR-CI records the highest F1 among all methods (0.74--0.85), confirming that its advantage arises from nonlinearity rather than scale alone.

\begin{table}[h]
    \centering
    \caption{\textbf{Sachs network results with nonlinear edge mechanisms.} Median SHD {\scriptsize(skeleton SHD)} across missingness rates and mechanisms (20 replicates, 11 nodes, 17 edges). Skeleton SHD counts missing and extra edges only, while total SHD additionally penalizes orientation errors. Methods: P-CI = PAIR-CI (fast); CC = complete case; TW = test-wise deletion; FZ-v = FZ-vote; FZ-R = FZ-Rubin. $\dagger$As MissDAG outputs a fully oriented DAG rather than a CPDAG, its total SHD (shown in gray) is not directly comparable to that of constraint-based methods; skeleton SHD provides the appropriate like-for-like metric.}
    \label{tab:sachs_nonlinear}
    \begin{tabular}{llcccccc}
    \toprule
    Rate & Mechanism & P-CI & CC & TW & FZ-v & FZ-R & MissDAG$^\dagger$ \\
    \midrule
    0\% & Complete & 14.0\,{\scriptsize(5)} & 17.0\,{\scriptsize(12)} & 17.0\,{\scriptsize(12)} & 17.0\,{\scriptsize(12)} & 17.0\,{\scriptsize(12)} & \textcolor{lightgray}{13.0\,{\scriptsize(13)}}$^\dagger$ \\
    \midrule
    10\% & MAR   & 14.0\,{\scriptsize(5)} & 17.5\,{\scriptsize(11)} & 16.5\,{\scriptsize(11)} & 18.0\,{\scriptsize(11)} & 17.0\,{\scriptsize(11)} & \textcolor{lightgray}{14.0\,{\scriptsize(13)}}$^\dagger$ \\
         & MNAR  & 14.0\,{\scriptsize(6)} & 16.5\,{\scriptsize(10)} & 17.0\,{\scriptsize(11)} & 17.0\,{\scriptsize(11)} & 18.0\,{\scriptsize(11)} & \textcolor{lightgray}{14.0\,{\scriptsize(14)}}$^\dagger$ \\
         & Mixed & 15.0\,{\scriptsize(5)} & 15.0\,{\scriptsize(10)} & 17.0\,{\scriptsize(11)} & 17.0\,{\scriptsize(11)} & 17.0\,{\scriptsize(10)} & \textcolor{lightgray}{13.5\,{\scriptsize(13)}}$^\dagger$ \\
    \midrule
    20\% & MAR   & 15.0\,{\scriptsize(5)} & 17.0\,{\scriptsize(12)} & 16.0\,{\scriptsize(11)} & 17.0\,{\scriptsize(12)} & 19.0\,{\scriptsize(12)} & \textcolor{lightgray}{14.0\,{\scriptsize(13)}}$^\dagger$ \\
         & MNAR  & 14.0\,{\scriptsize(6)} & 16.0\,{\scriptsize(10)} & 16.5\,{\scriptsize(11)} & 17.0\,{\scriptsize(11)} & 17.0\,{\scriptsize(12)} & \textcolor{lightgray}{14.0\,{\scriptsize(14)}}$^\dagger$ \\
         & Mixed & 15.0\,{\scriptsize(7)} & 16.0\,{\scriptsize(11)} & 17.0\,{\scriptsize(10)} & 17.0\,{\scriptsize(11)} & 17.0\,{\scriptsize(10)} & \textcolor{lightgray}{13.0\,{\scriptsize(13)}}$^\dagger$ \\
    \midrule
    40\% & MAR   & 14.0\,{\scriptsize(6)} & 16.0\,{\scriptsize(9)} & 17.0\,{\scriptsize(10)} & 16.0\,{\scriptsize(10)} & 16.0\,{\scriptsize(12)} & \textcolor{lightgray}{14.5\,{\scriptsize(14)}}$^\dagger$ \\
         & MNAR  & 15.0\,{\scriptsize(8)} & 16.0\,{\scriptsize(10)} & 16.5\,{\scriptsize(11)} & 16.0\,{\scriptsize(12)} & 17.5\,{\scriptsize(11)} & \textcolor{lightgray}{14.0\,{\scriptsize(14)}}$^\dagger$ \\
         & Mixed & 15.0\,{\scriptsize(5)} & 17.0\,{\scriptsize(11)} & 16.0\,{\scriptsize(10)} & 16.5\,{\scriptsize(10)} & 16.0\,{\scriptsize(10)} & \textcolor{lightgray}{14.0\,{\scriptsize(14)}}$^\dagger$ \\
    \bottomrule
\end{tabular}

\end{table}

\section{Variance Estimator Validation}
\label{app:variance_comparison}

Our test uses the provably consistent within-imputation variance estimator of \citet[Theorem~4]{bayle2020cross}:
\begin{equation}
    U_m^{\mathrm{Bayle}} = \frac{1}{n} \cdot \frac{1}{K} \sum_{k=1}^{K} 
    \hat{\sigma}^2_{k,m}, \qquad
    \hat{\sigma}^2_{k,m} = \frac{1}{n_k - 1} \sum_{i \in F_k} 
    (d_{ikm} - \hat{\mu}_{km})^2,
    \label{eq:bayle}
\end{equation}
where $d_{ikm}$ denotes the individual loss difference for observation $i$ in fold $k$ and imputation $m$ (Equation~\ref{eq:app_fold_diff}). We evaluate this against the Nadeau--Bengio estimator
\[
    U_m^{\mathrm{NB}} = \left(K^{-1} + \frac{n_k}{n - n_k}\right) s^2_m,
    \qquad
    s^2_m := \frac{1}{K-1} \sum_k (\hat{\mu}_{km} - \hat{\mu}_m)^2,
\]
for which \citet{nadeau1999inference} conjecture conservativeness ($c_m \geq 1$) but provide no formal proof.

\paragraph{Calibration.}
Under the null (signal $= 0$), false positive rates for the \citeauthor{bayle2020cross} estimator range from 0.9\% to 2.4\%, compared with 0.2--0.9\% for the Nadeau--Bengio estimator. All values are well below the nominal 5\%, but the \citeauthor{bayle2020cross} estimator lies closer to the asymptotic target. This pattern is consistent across mechanisms (complete: 0.2\%$\to$1.8\%; MAR: 0.7\%$\to$0.9\%; MNAR: 0.9\%$\to$2.4\%).

\paragraph{Power.}
A consistent variance estimate produces larger $t$-statistics and higher rejection rates. At signal $= 0.3$, the \citeauthor{bayle2020cross} estimator achieves 38.4--54.9\% power vs.\ 28.2--46.2\% for Nadeau--Bengio (a 9--10 percentage-point gain); at signal $= 1.0$, the equivalent figures are 96.2--99.3\% vs.\ 91.3--97.1\%. \citeauthor{bayle2020cross}'s advantage is therefore largest at moderate signal strengths, where the test operates on the steepest part of the power curve.

\paragraph{Graph recovery.}
At $p = 10$ with nonlinear edges, both estimators achieve identical median SHD (10 under both MAR and MNAR) and perfect median precision (1.000). The \citeauthor{bayle2020cross} estimator delivers modest gains in recall under MAR (0.467 vs.\ 0.455) and MNAR (0.500 vs.\ 0.449), corresponding to median F1 scores of 0.636 vs.\ 0.625 and 0.667 vs.\ 0.615, respectively.

\paragraph{Summary}
The \citeauthor{bayle2020cross} estimator provides provably consistent variance estimation while improving power by 9--10 percentage points at moderate signal strengths, yielding a clear advantage over the Nadeau--Bengio approach. The cost is a slightly narrower finite-sample calibration margin (2.4\% vs.\ 0.9\% maximum false positive rate), though both approaches remain below the 5\% target.

\section{Power for Sample-Size Planning}
\label{app:power_planning}

Extending Section~\ref{sec:exp_ci}'s analysis, Table~\ref{tab:power} presents average rejection rates under $H_1$ for PAIR-CI across three DGPs (linear Gaussian, post-nonlinear, and latent confounder), three signal strengths ($\{0.3, 0.6, 1.0\}$), and three sample sizes ($n \in \{500, 1{,}000, 2{,}000\}$) at $|\Xb| = 5$, averaged over MAR and MNAR mechanisms with 30\% missingness.

\begin{table}[t]
    \centering
    \caption{\textbf{Power for sample-size planning.} Average rejection rate under $H_1$ averaged over MAR and MNAR mechanisms with 30\% missingness, $|\Xb| = 5$, and $\alpha = 0.05$ (100 replicates per cell).}
    \label{tab:power}
    \begin{tabular}{llccc}
    \toprule
    DGP & Signal & $n{=}500$ & $n{=}1000$ & $n{=}2000$ \\
    \midrule
    Linear Gaussian    & 0.3 & 0.35   & 0.56   & 0.74   \\
                       & 0.6 & 0.88   & 0.96   & 0.99   \\
                       & 1.0 & 0.99   & 1.00   & 1.00   \\
    \midrule
    Post-nonlinear     & 0.3 & 0.12   & 0.29   & 0.55   \\
                       & 0.6 & 0.57   & 0.77   & 0.95   \\
                       & 1.0 & 0.91   & 0.97   & 1.00   \\
    \midrule
    Latent confounder  & 0.3 & 0.14   & 0.49   & 0.87   \\
                       & 0.6 & 0.69   & 0.96   & 1.00   \\
                       & 1.0 & 0.86   & 1.00   & 1.00   \\
    \bottomrule
\end{tabular}

\end{table}

\paragraph{Rule of thumb.}
For 80\% power under MAR or MNAR missingness, moderate signal ($= 0.6$) requires approximately $n \approx 500$ for the linear Gaussian DGP, $n \approx 1{,}000$ for the latent-confounder DGP, and $n \approx 2{,}000$ for the post-nonlinear DGP. A weak signal ($= 0.3$) demands substantially larger samples of $n > 2{,}000$ across all settings, with the post-nonlinear case requiring the most data. Strong signals ($= 1.0$) result in near-100\% power by $n \approx 1{,}000$ across conditions. Practitioners working with weak effects at small sample sizes should thus expect limited power in post-nonlinear conditions. If the functional form is approximately linear, the requisite sample size is substantially smaller.

\section{Sensitivity to Number of Imputations}
\label{app:m_sensitivity}

A natural question raised by the power limitation discussed in Section~\ref{sec:discussion} is whether increasing $M$ recovers power by tightening the Barnard--Rubin reference distribution. We examine this issue by extending the standalone experiment in Section~\ref{sec:exp_ci}, varying $M \in \{3, 5, 10, 20\}$ across two DGPs (linear Gaussian and post-nonlinear), two sample sizes ($n \in \{500, 1{,}000\}$), three missingness mechanisms (complete, MAR, and MNAR at 30\% missingness), and four signal strengths ($\{0, 0.3, 0.6, 1.0\}$), with $|\Xb| = 5$, $K = 5$, and 50 replicates per configuration.

\paragraph{Calibration preserved.}
The false positive rate at signal $= 0$ remains below the nominal 5\% level for all $M$, with 4.0\% the highest observed rate ($M = 5$ under MNAR).

\paragraph{Modest power gains at weak signals.}
At signal $= 0.3$, average rejection rates range from 0.25 to 0.34 at $M = 3$ and from 0.25 to 0.35 at $M = 20$, depending on the missingness mechanism. The largest gain occurs under MAR (approximately 6.5 percentage points, from 0.25 to 0.32), decreasing to about 1.5 percentage points under MNAR. At signal $= 0.6$, increasing $M$ from 5 to 20 yields gains of 0.5--3 percentage points. At signal $= 1.0$, power is already near 1, and additional imputations provide no further gain.

\paragraph{Why the gain is smaller than expected.}
Barnard--Rubin degrees of freedom depend on the fraction of missing information (i.e., the component of uncertainty attributable to missing data rather than sampling variability): $\gamma = (1 + 1/M)\,B / T$, where $B$ is between-imputation variance and $T = \bar{W} + (1 + 1/M)B$. When $\gamma$ is large, increasing $M$ substantially reduces the correction; when $\gamma$ is small, the reference distribution is already close to normal and more imputations have little effect on $\nu$. In our paired design, $\gamma$ is modest by construction: since both models receive the same imputed $\hatX$, imputation error is differenced out in $\hat{\mu}_m$, and variability across imputations is dominated by within-imputation cross-validation noise ($\bar{W}$) rather than between-imputation variance ($B$). In short, the error-canceling mechanism that confers MNAR robustness (Remark~\ref{rem:paired_robust}) is precisely what renders the Barnard--Rubin correction mild.

\paragraph{Implication.}
Residual underrejection at weak signals is therefore learner-bound rather than inference-bound, reflecting the signal-to-noise ratio at which random forests detect conditional dependence at $n = 500$--$1{,}000$ with $|\Xb| = 5$. In practice, power is primarily controlled by $n$, learner capacity, and $K$ rather than by $M$. Combined with the \citeauthor{bayle2020cross} estimator's gain of 9--10 percentage points (Appendix~\ref{app:variance_comparison}), the total tunable power budget is 9--16 percentage points at weak signals.

\section{Adversarial Robustness of the $\kappa$-Cancellation}
\label{app:adversarial_kappa}

We stress-test the $\kappa$-cancellation argument in Remark~\ref{rem:paired_robust} by constructing adversarial DGPs that simultaneously maximize all three components in the $\kappa$ decomposition (Appendix~\ref{app:kappa}): imputation error, the correlation of $Y$ with the unrecovered component of $\Xb$, and the dependence of $Z$ on that component. All tests are conducted at signal $= 0$ (true null) with $n = 500$, 30--50\% MNAR missingness (logistic missingness model with steepness parameter $= 5$), and cached imputation. We report 100 replicates per cell with exact Clopper--Pearson 95\% confidence intervals.

\paragraph{Adversarial DGPs.}
We consider six topologies: \emph{Hub} (one incomplete hub variable $X_0$ and 10 weakly correlated children; $Y = X_0 + \varepsilon_Y$, $Z = X_0 + \varepsilon_Z$); \emph{Chain} (chain $X_0 \to X_1 \to X_2 \to X_3$, with $X_2$ incomplete and both $Y$ and $Z$ functions of $X_2$); \emph{Dense-block} (5-variable block-correlated graph, 3 incomplete variables, within-block correlation $\rho = 0.8$, cross-block correlation $\rho = 0.2$); \emph{Weak-hub} (hub with $\alpha = 0.2$, inducing poor imputation); \emph{Branch-separator} ($Y$ depends only on an incomplete $A$, $Z$ on an independent incomplete $B$, with $A \indep B$); and \emph{Hub-nonlinear} (hub with $Y = \sin(X_0) + \varepsilon$, $Z = X_0^2 + \varepsilon$). We evaluate both full and incomplete-only conditioning sets.

\paragraph{Results with linear MICE (default).}
Across five of six adversarial DGPs---hub, chain, dense-block, weak-hub, and branch-separator---cached linear imputation maintains a false positive rate $\leq 1.5\%$ (upper CI $\leq 4.3\%$) in every cell, even at 50\% MNAR missingness. The paired design's $\kappa$-cancellation is robust in these settings because cached imputation includes $Y$ and $Z$ in the imputer, allowing the linear MICE model to absorb information about $\Xb_{\textup{miss}}$ into $\hatX$.

\paragraph{Failure on adversarial nonlinear hub.}
The single exception is the hub-nonlinear case (false positive rate 87--100\% with linear MICE), where $Y = \sin(X_0)$ and $Z = X_0^2$ induce nonlinear dependence that a linear imputer cannot capture. Random forest-based MICE lowers false positives in this setting (to 57--82\%) but inflates them in the linear adversarial DGP, reaching 83\% in the most extreme case due to overfitting at $n = 500$. The linear imputer is preferable in approximately linear settings: although it fails under strong nonlinear dependence between the test variables and $X_{\mathrm{miss}}$ (for instance, the hub-nonlinear case), it avoids the overfitting that inflates false positive rates with random forest-based \textsc{MICE} at small $n$.

\paragraph{Relationship between $\kappa$ and false positive rate.}
Figure~\ref{fig:kappa_fpr_trace} plots the empirical residual $\kappa_Y \cdot \kappa_Z$ against the PAIR-CI false positive rate for each adversarial cell. We estimate $\kappa_Y$ and $\kappa_Z$ by replicating the data-generating and imputation steps, which are unconditional analogues of the conditional correlations in Definition~\ref{def:kappa_factors}. Under MNAR, the two parameters can differ because the missingness mechanism induces $\E[\delta \mid \hatX] \neq 0$, though their ordering across cells tracks the formal residual empirically.

Three regimes emerge. Cells with $\kappa_Y \cdot \kappa_Z \leq 10^{-2}$ exhibit a false positive rate within 1 percentage point of the nominal level; those in the range $10^{-2}$--$10^{-1}$ inflate to 5--35\%; and those exceeding $10^{-1}$, which are confined to the hub-nonlinear pathology under linear MICE and to the overfitting setting of random forest-based MICE, reach $\geq 50\%$. The sharp threshold near $\kappa_Y \cdot \kappa_Z \approx 0.05$ supports the conjectured sensitivity bound $|\textup{FPR} - \alpha| \lesssim C \cdot \kappa_Y \kappa_Z$ (Appendix~\ref{app:kappa}).

\begin{figure}[t]
    \centering
    \includegraphics[width=0.85\textwidth]{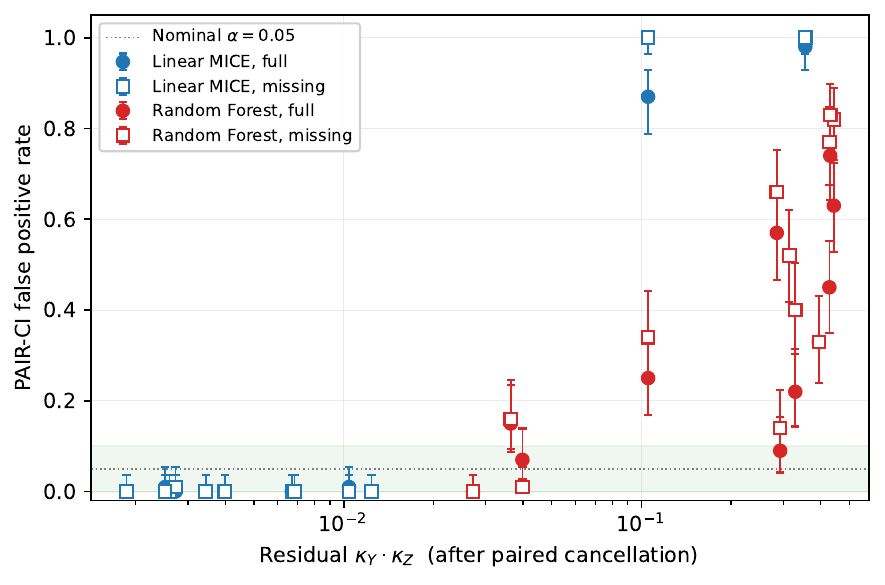}
    \caption{\textbf{Empirical relationship between $\kappa_Y \cdot \kappa_Z$ and false positive rate across adversarial cells.} Each point represents a combination of DGP, missingness rate, conditioning strategy, and imputer. The horizontal axis measures the post-cancellation residual $\kappa_Y \cdot \kappa_Z$ (Decomposition~\ref{dec:kappa}); the vertical axis measures the PAIR-CI false positive rate over 100 replicates, with Clopper--Pearson 95\% confidence intervals. The shaded green band marks the region $[0, 0.10]$, encompassing the nominal level and Monte Carlo noise at $n_{\textup{rep}} = 100$.}
    \label{fig:kappa_fpr_trace}
\end{figure}

\paragraph{Operating envelope.}
The $\kappa$-cancellation with cached linear MICE remains robust unless the incomplete variable simultaneously exhibits (i) nonlinear influence on both test variables, (ii) strong value-dependent MNAR missingness, and (iii) no linear proxy in the observed data. This conjunction defines a narrow failure mode that does not arise in our experimental benchmarks (Erd\H{o}s--R\'enyi, ALARM, HAILFINDER, Sachs). Practitioners working with strongly nonlinear DGPs and adversarial missingness should verify calibration empirically prior to deployment.

\section{Learner Comparison}
\label{app:learner_comparison}

To assess whether the power limitation in Section~\ref{sec:discussion} can be mitigated by a stronger base learner, we compare random forests with three alternatives---ExtraTrees \citep{geurts2006extremely}, LightGBM, and XGBoost---with LightGBM evaluated under both default and tuned settings  ($n_{\textup{trees}} = 200$, learning rate $0.05$, $\texttt{min\_child\_samples} = 2$, $\texttt{num\_leaves} = 63$, subsample $= 0.8$). All variants differ only in the choice of learner, with the remainder of the PAIR-CI architecture held fixed (paired permutation, \citeauthor{bayle2020cross} within-imputation variance, Rubin's rules, Barnard--Rubin degrees of freedom). We run 50 replicates at $n = 500$, $|\Xb| = 5$, and signal $\in \{0, 0.3, 0.6, 1.0\}$ across two DGPs (linear Gaussian and post-nonlinear) under complete, MAR, and MNAR mechanisms.

\paragraph{Calibration.}
ExtraTrees exhibits the highest standalone calibration error under MNAR (false positive rate 9\%, 95\% CI [5.4\%, 13.9\%]). Random forests are slightly elevated (5.5\%) but within sampling error of the nominal level, while gradient-boosting variants (LightGBM, LightGBM-tuned, XGBoost) maintain false positive rates $\leq 1.5\%$. The additional split randomization in ExtraTrees inflates fold-level variance that the \citeauthor{bayle2020cross} estimator does not fully absorb. 

\paragraph{Power.}
Random forests are most powerful at weak signals. At signal $= 0.3$ under MNAR, they achieve 36\% rejection, against 24\% for XGBoost, 23\% for LightGBM, and 22\% for LightGBM-tuned. At signal $= 0.6$, the gap closes: all four learners fall within 5 percentage points under MNAR (random forests 82\%, XGBoost 82\%, LightGBM 83\%, LightGBM-tuned 78\%). Bagging-based variance reduction in random forests interacts more favorably with the \citeauthor{bayle2020cross} estimator at weak signals, and the choice of learner matters less at moderate-to-strong signals.

\paragraph{Runtime.}
Among calibrated learners, random forests are fastest in median runtime (3.5\,s/test vs.\ 4.4\,s for LightGBM, 5.2\,s for XGBoost, and 18.1\,s for LightGBM-tuned). ExtraTrees is faster still (2.4\,s) but exhibits elevated false positive rates under MNAR in standalone evaluation (as noted above), motivating its restriction to the fast variant with early stopping. 

\paragraph{Summary}
In sum, random forests offer the best combination of weak-signal power, near-nominal calibration, and competitive runtime, justifying their selection as the default for the general variant.

\section{Early Stopping Calibration}
\label{app:early_stopping}

The fast variant of PAIR-CI employs an early-stopping heuristic: if the absolute $t$-statistic exceeds 4.0 after $M' = 2$, remaining imputations are skipped. This rule reduces computation when signal is strong but could introduce anti-conservative bias in borderline cases where the first two imputations yield large $t$-statistics by chance. We investigate this possibility by repeating the standalone experiment (signal $= 0$, $n = 1{,}000$) with and without early stopping, comparing false positive rates across DGPs and missingness mechanisms. Across the 600 null evaluations (2 DGPs $\times$ 3 mechanisms $\times$ 100 replicates), decision-level agreement is 100\%, confirming that the early-stopping heuristic does not affect calibration. Under $H_0$, $t$-statistics remain small and the threshold of 4.0 is never reached.

\section{Imputation Degradation: Calibration under Poor Imputation}
\label{app:imputation_degradation}

To test Remark~\ref{rem:paired_robust}'s contention that calibration for the internal null $Z \indep Y \mid \hatX$ extends to the scientific null when imputation is adequate, we deliberately degrade imputation quality in the standalone calibration experiment (Section~\ref{sec:exp_ci}).

\paragraph{Setup.}
We implement PAIR-CI under the null (signal $= 0$) with three imputation strategies of declining quality:
\begin{itemize}
    \item \textbf{MICE}: scikit-learn's \texttt{IterativeImputer} function with stochastic posterior draws ($M = 5$ distinct datasets);
    \item \textbf{Mean}: deterministic column-mean imputation ($M = 5$ identical datasets, so between-imputation variance $B = 0$);
    \item \textbf{Marginal}: random draws from the observed marginal of each column ($M = 5$ stochastic datasets, without conditioning on other variables).
\end{itemize}
We use 200 repetitions per configuration with $n = 500$, $|\Xb| = 5$, and 30\% missingness across linear Gaussian and post-nonlinear DGPs under complete, MAR, and MNAR mechanisms. Power at signal $= 0.6$ is also evaluated.

\paragraph{False positive rate.}
Panel~A in Table~\ref{tab:degradation_combined} displays rejection rates under $H_0$. MICE maintains false positive rates $\leq 3.8\%$ across all mechanisms, consistent with Table~\ref{tab:calibration}. 
Mean and marginal imputation inflate false positives substantially: mean imputation reaches 29.0\% under MAR, marginal imputation 43.3\% under MNAR. All strategies coincide at 2.0\% on complete data---when there is nothing to impute---confirming that inflation arises only when low-quality imputation interacts with genuine missingness.

\begin{table}[h]
    \centering
    \caption{\textbf{Calibration and power under degraded imputation.} Panel~A reports average rejection rates under $H_0$ (signal $= 0$); Panel~B reports rejection rates at signal $= 0.6$. Both panels average over linear Gaussian and post-nonlinear DGPs across 200 repetitions ($n = 500$, $|\Xb| = 5$, $\alpha = 0.05$). In Panel~A, values exceeding $\alpha$ are shown in \textcolor{red}{red}. In Panel~B, rejection rates for mean and marginal imputation under MAR and MNAR (shown in gray) are uninterpretable as power due to inflated false positives (Panel~A).}
    \label{tab:degradation_combined}
    \begin{tabular}{lcccc}
    \toprule
    & \multicolumn{3}{c}{False positive rate} \\
    \cmidrule(lr){2-4}
    & \multicolumn{3}{c}{Power (signal $= 0.6$)} \\
    \cmidrule(lr){2-4}
    Strategy & Complete & MAR & MNAR \\
    \midrule
    \multicolumn{4}{l}{\textit{Panel A: False positive rate (signal $= 0$)}} \\
    MICE      & 0.020 & 0.028 & 0.037 \\
    Mean      & 0.020 & \textcolor{red}{0.290} & \textcolor{red}{0.172} \\
    Marginal  & 0.020 & \textcolor{red}{0.307} & \textcolor{red}{0.432} \\
    \midrule
    \multicolumn{4}{l}{\textit{Panel B: Power (signal $= 0.6$)}} \\
    MICE      & 0.757 & 0.767 & 0.760 \\
    Mean$^\dagger$ & 0.757 & \textcolor{lightgray}{0.830} & \textcolor{lightgray}{0.740} \\
    Marginal$^\dagger$ & 0.757 & \textcolor{lightgray}{0.823} & \textcolor{lightgray}{0.760} \\
    \bottomrule
\end{tabular}
\end{table}

\paragraph{Power.}
Panel~B reports rejection rates at signal $= 0.6$. MICE's power is stable across mechanisms (75.7--76.8\%). Mean and marginal imputation yield higher rejection rates under MAR and MNAR, though these figures are not interpretable as power: the corresponding false positive rate is up to 22 times the nominal level (Panel~A), so excess rejections conflate true and false positives.

\paragraph{Interpretation.}
Mean and marginal imputation distort the conditioning set sufficiently that $\hatX$ induces spurious associations between $Z$ and $Y$---exactly the failure mode identified in Proposition~\ref{prop:impossibility}. The paired design attenuates but does not eliminate this effect: false positive rate inflation is less severe than for Rubin's rules (11\% MAR, 28\% MNAR in Table~\ref{tab:calibration}) yet remains substantial. Imputation quality is therefore load-bearing for calibration---not merely for power---and practitioners should not opt for cruder imputation strategies without verifying calibration empirically.


\end{document}